\documentclass{llncs}
\pagestyle{plain}

\usepackage[english]{babel}
\usepackage[utf8x]{inputenc}
\usepackage[T1]{fontenc}
\usepackage{fullpage}

\usepackage{enumitem}
\usepackage{caption}
\usepackage[skins]{tcolorbox}
\usepackage{amsmath}
\usepackage{mathtools}
\usepackage{graphicx}
\usepackage{amssymb}
\usepackage[colorinlistoftodos]{todonotes}
\usepackage[colorlinks=true, allcolors=blue,breaklinks=true]{hyperref}
\usepackage{breakcites}
\usepackage{upgreek}
\usepackage{bbold}
\usepackage{stmaryrd}
\usepackage{wasysym}
\usepackage{framed}
\usepackage{empheq}
\usepackage{enumitem}
\usepackage[boxed]{algorithm2e}
\usepackage{xcolor}
\usepackage{theorem}
\newcounter{counter}
\DeclareMathOperator*{\E}{\mathbb{E}}


\usepackage{float}
\usepackage[font=small,labelfont=bf]{caption}
\usepackage{xspace}

\newtheorem{defi}[counter]{Definition}
\newtheorem{thm}[counter]{Theorem}
\newtheorem{lem}[counter]{Lemma}
\newtheorem{ass}[counter]{Assumption}

\newtheorem{cor}[counter]{Corollary}

\theorembodyfont{\normalfont}
\newtheorem{rem}[counter]{Remark}

\newcommand{\bra}[1]{\langle #1|}
\newcommand{\ket}[1]{|#1\rangle}

\newcommand{\braket}[2]{\langle #1|#2\rangle}

\definecolor{dgreen}{rgb}{.1,.5,.1}

\newcommand{\proj}[1]{\ket{#1}\!\bra{#1}}

\usepackage{textgreek}
\newcommand{\sigp}{\textSigma-protocol\xspace}
\newcommand{\sigps}{\textSigma-protocols\xspace}

\def\regY{\ensuremath{\textit{\textsf{Y}}}}

\def\regZ{\ensuremath{\textit{\textsf{Z}}}}

\def\regE{\ensuremath{\textit{\textsf{E}}}}
\def\regX{\ensuremath{\textit{\textsf{X}}}}

\newcommand{\Gen}{\ensuremath{\mathsf{Gen}}\xspace}
\newcommand{\Sign}{\ensuremath{\mathsf{Sign}}\xspace}
\newcommand{\Ver}{\ensuremath{\mathsf{Verify}}\xspace}

\newcommand{\EUFNMA}{\mathsf{EUF\!-\!NMA}}
\newcommand{\sEUFCMA}{\mathsf{sEUF\!-\!CMA}}

\newcommand*\widefbox[1]{\fbox{\hspace{2em}#1\hspace{2em}}}

\DeclareMathSymbol{\shortminus}{\mathbin}{AMSa}{"39}

\hyphenation{non-in-ter-ac-tive}
\hyphenation{ex-tract-a-bi-li-ty}

\newcommand{\switch}[2]{#1}  

\title{Security of the Fiat-Shamir Transformation\\in the Quantum Random-Oracle Model\thanks{\textcopyright IACR 2019. This is a reformatted and slightly modified version of the article submitted by the authors to the IACR and to
		Springer-Verlag in May 2019. The published version is available from the
		proceedings of  CRYPTO 2019.}}
\titlerunning{QROM security of the Fiat-Shamir Transformation}
\author{Jelle Don\inst{1,2} \and  Serge Fehr\inst{1,3,4} \and Christian Majenz\inst{2,4} \and Christian Schaffner\inst{2,4}}
\institute{
	Centrum Wiskunde \& Informatica (CWI), Amsterdam, Netherlands \and 
	Institute for Logic, Language and Computation, University of Amsterdam, Amsterdam, Netherlands \and 
	Mathematical Institute, Leiden University, Netherlands \and
	QuSoft, Amsterdam, Netherlands 
	\\ \email{jelle.don@cwi.nl}, \email{serge.fehr@cwi.nl}, \email{c.majenz@uva.nl}, \email{c.schaffner@uva.nl}}

\begin{document}
	\maketitle

	\begin{abstract}
		The famous Fiat-Shamir transformation turns any public-coin three-round interactive proof, i.e., any so-called \sigp, into a non-interactive proof in the random-oracle model. 
		We study this transformation in the setting of a {\em quantum adversary} that in particular may query the random oracle in quantum superposition. 
		
		Our main result is a generic reduction that transforms any quantum dishonest prover attacking the Fiat-Shamir transformation in the quantum random-oracle model into a similarly successful quantum dishonest prover attacking the underlying \sigp (in the standard model). Applied to the standard soundness and proof-of-knowledge definitions, our reduction implies that both these security properties, in both the computational and the statistical variant, are preserved under the Fiat-Shamir transformation even when allowing quantum attacks. 
		Our result improves and completes the partial results that have been known so far, but it also proves wrong certain claims made in the literature. 
		
		In the context of post-quantum secure signature schemes, our results imply that for any \sigp that is a proof-of-knowledge against quantum dishonest provers (and that satisfies some additional natural properties), the corresponding Fiat-Shamir signature scheme is secure in the quantum random-oracle model. 
		For example, we can conclude that the non-optimized version of \textsf{Fish}, which is the bare Fiat-Shamir variant of the NIST candidate \textsf{Picnic}, is secure in the quantum random-oracle model. 
	\end{abstract}

	\section{Introduction}
	
	\subsubsection{The (quantum) random-oracle model. }
	
	The {\em random-oracle model} (ROM) is a means to treat a cryptographic hash function $H$ as an ideal primitive. In the ROM, the only way to ``compute'' the hash $H(x)$ of any value $x$ is by making a {\em query} to an imaginary entity, the {\em random oracle} (RO), which has chosen $H$ uniformly at random from the set of {\em all} functions with the considered domain and range.
	
	The hope is that if a cryptographic scheme is secure in the ROM then it is also secure in the standard model, as long as $H$ is instantiated with a ``good enough'' cryptographic hash function. Even though in general we cannot hope to obtain provable security in the standard model in this way, (since there exist artificial counter examples~\cite{Canetti2004}), 
	this approach works extremely well in practice, leading to very efficient schemes that tend to resist all known attacks. 
	
	What makes the ROM particularly convenient is that in the security proof of a cryptographic scheme, we can control the RO. For instance, simply by recording the queries that the adversary makes to the RO, we know exactly which hash values he knows, and the hash value $H(x)$ is random to him for any $x$ that he has not queried. Furthermore, we can {\em reprogram} the RO, meaning that we can let $H(x)$ be some particular value $y$ for some specific $x$, as long as it is random from the adversary's perspective. 
	
	When considering a {\em quantum} adversary, the picture changes a bit. In order to model that such an adversary can evaluate any hash function {\em in superposition} on different inputs, we must allow such a quantum adversary in the ROM to make superposition queries to the RO: for any superposition $\sum_x \alpha_x \ket{x}$ it may learn $\sum_x \alpha_x \ket{x}\ket{H(x)}$ by making a single query to the RO.
	This is referred to as the {\em quantum random-oracle model} (QROM) \cite{Boneh2011}. 
	
	Unfortunately, these superposition queries obstruct the above mentioned advantages of the ROM. By basic properties of quantum mechanics one cannot observe or locally copy such superposition queries made by the adversary without disturbing them. Also, reprogramming is usually done for an $x$ that is queried by the adversary at a certain point, so also here we are stuck with the problem that we cannot look at the queries without disturbing them. 
	
	As a consequence, security proofs in the ROM almost always do not carry over to the QROM. This lack of proof does not mean that the schemes become insecure; on the contrary, unless there is some failure because of some other reason%
	\footnote{E.g., the underlying computational hardness assumption does not hold anymore in the context of a quantum adversary. },
	we actually expect typical schemes to remain secure. However, it is often not obvious how to find a security proof in the QROM. Some examples where security in the QROM has been established are \cite{Unruh2014,Zhandry2015,Eaton2015,Unruh2015,Kiltz2017,Alkim2017,Zhandry2018,Saito2018,Bos2018}.

	\subsubsection{Main technical result. }
	
	Our main technical result (Theorem~\ref{thmmain}) can be understood as a particular way to overcome\,---\,to some extent\,---\,the above described limitation in the QROM of not being able to ``read out'' any query to the RO and to then reprogram the corresponding hash value. Concretely, we achieve the following. 
	
	We consider an arbitrary quantum algorithm $\cal A$ that makes queries to the RO and in the end outputs a pair $(x,z)$, where $z$ is supposed to satisfy some relation with respect to $H(x)$, e.g., $z = H(x)$. We then show how to {\em extract} early on, by measuring one of the queries that $\cal A$ makes, the very $x$ that $\cal A$ will output, and to {\em reprogram} the RO at the point $x$ with a fresh random value $\Theta$, with the effect that the pair $(x,z)$ that $\cal A$ then outputs now satisfies the given relation with respect to~$\Theta$, with a not too large loss in probability. 
	
	The way this works is surprisingly simple. We choose the query that we measure uniformly at random among all the queries that $\cal A$ makes (also counting $\cal A$'s output), in order to (hopefully) obtain $x$. Subsequently we reprogram the RO, so as to answer $x$ with $\Theta$, {\em either} from this point on {\em or} from the following query on, where this binary choice is made at random. This last random decision seems counter-intuitive, but it makes our proof work. Indeed, we prove that the probability that $(x,z)$ satisfies the required relation drops by no more than a factor $O(q^2)$, where $q$ is the number of oracle queries $\cal A$ makes.

	\subsubsection{Application to the Fiat-Shamir transformation. }
	
	The Fiat-Shamir transformation~\cite{Fiat1987} turns any public-coin three-round interactive proof, i.e., any so-called \sigp, into a non-interactive proof in the (Q)ROM. In the classical case it is well known that the security properties of the \sigp are inherited by the Fiat-Shamir transformation~\cite{Bellare1993a,Faust2012}. In the quantum setting, when considering the security of the Fiat-Shamir transformation against quantum dishonest provers in the QROM, mainly negative results are known\,---\,see below for a more detailed exposition of previous results and how they compare to ours. 
	
	It is quite easy to see that the above result on the reprogrammability of the RO is exactly what is needed to turn a quantum prover that attacks the Fiat-Shamir transformation into a quantum prover that attacks the underlying $\Sigma$ protocol. Indeed, from any Fiat-Shamir dishonest prover $\cal A$ that tries to produce a proof $\pi = (a,z)$ for a statement $x$, we obtain an interactive dishonest prover for the $\Sigma$ protocol that extracts $a$ from $\cal A$ and sends it to the verifier, and then uses the received challenge $c$ to reprogram the RO, so that the $z$ output by $\cal A$ will be a correct reply with respect to $c$ with a probability not much smaller than the probability that $\cal A$ succeeds in forging $\pi$ in the QROM. 
	
	This gives us a very generic transformation (stated in Theorem~\ref{thmFS} below) from a Fiat-Shamir dishonest prover to a \sigp dishonest prover that is similarly successful, up to a loss in probability of order $O(q^2)$. Applied to the standard notions of soundness and proof-of-knowledge, we prove that both these security properties, in both the computational and the statistical variant, are preserved under the Fiat-Shamir transformation in the QROM (Corollaries~\ref{corProof} and~\ref{corPoK}).

	\subsubsection{Comparison with prior results. }
	
	Mainly negative results are known about the security of the Fiat-Shamir transformation against quantum attacks. ~\cite{Ambainis2014} presented attacks (relative to some oracle) against \sigps that satisfy only {\em computational} unique responses (see Section \ref{seccollapsing} for an informal definition) as opposed to {\em perfect} unique responses, and showed that these attacks carry over to a Fiat-Shamir transformed protocol.
	
	Currently, the only known positive result on the security of the Fiat-Shamir transformation against quantum attacks is the result by Unruh~\cite{Unruh2017}, which shows that statistical soundness of the \sigp implies statistical soundness of the Fiat-Shamir transformation.\footnote{In the (quantum) random-oracle model, {\em statistical} security considers a computationally unbounded attacker with a polynomially bounded number of oracle queries. \cite{Unruh2017} in contrast refers to this setting as `computational'.} One of the implications of our result is that this carries over to {\em computational} soundness: if the \sigp is {\em computationally sound} (as a `proof'), 
	then its Fiat-Shamir transformation is computationally sound as well. 
	Interestingly, Unruh seems to suggest in \cite{Unruh2017} (right after Theorem 21)  that this is not true in general, due to a counterexample from \cite{Ambainis2014}. The counter example is, however, a \sigp that is computationally {\em special} sound but not computationally sound (the issue being that in the quantum setting, special soundness does not imply ordinary soundness). 
	
	We point out that \cite{Dagdelen} claims an impossibility result about the soundness of the Fiat-Shamir transformation as a quantum proof of knowledge, which contradicts the implications of our result to proofs of knowledge. However, their result only applies to a restricted notion of proof of knowledge where the extractor is not allowed to measure any of the adversary's queries to the random oracle. The rational for this restriction was that such a measurement would disturb the adversary's quantum state beyond control; however, our technical result shows that it actually is possible to measure one of the adversary's queries and still have sufficient control over the adversary's behavior. 
	
	Indeed, our generic transformation from a Fiat-Shamir dishonest prover to a \sigp dishonest prover implies that {\em any} security property (against dishonest provers) of the \sigp carries over  unchanged to the Fiat-Shamir transformation, be it computational or statistical, plain soundness or the stronger proof-of-knowledge property.

	\subsubsection{Circumventing prior negative results. }
	
	At first glance, the negative results from \cite{Ambainis2014} against \sigps together with our new positive results seem to give a complete answer to the question of the security of the Fiat-Shamir transformation against quantum attacks. However, there is actually more to it. 
	
	We consider a {\em stronger} but still meaningful notion of {\em computationally unique responses}, which is in the spirit of the {\em collapsing property} as introduced by Unruh \cite{Unruh2016}. We call the new notion \emph{quantum computationally unique responses} and define it in Definition \ref{def:qcur}. Adapting a proof from \cite{Unruh2012}, it is not hard to see that a \sigp with (perfect or computational) special soundness and quantum computationally unique responses is a computational proof of knowledge. Therefore, our main result then implies that its Fiat-Shamir transformation is a computational proof of knowledge as well.
	
	Thus, with the right adjustments of the considered {\em computational} soundness properties, the negative results from~\cite{Ambainis2014} may actually be turned into positive answers. One caveat here is that we expect proving quantum computationally unique responses to be much harder than computationally unique responses.

	\subsubsection{Application to signatures. }
	
	Our positive results on the Fiat-Shamir transformation have direct applications to the security of Fiat-Shamir signatures. From the proof-of-knowledge property of the Fiat-Shamir transformation we immediately obtain the security of the Fiat-Shamir signature scheme under a {\em no-message attack}, assuming that the public key is a hard instance (Theorem~\ref{thm:NMA}). Furthermore, \cite{Unruh2017} and \cite{Kiltz2017} have shown that for Fiat-Shamir signatures, up to some loss in the security parameter and under some additional mild assumptions on the underlying \sigp, one can also derive security under {\em chosen-message attack}. 
	
	In conclusion, Fiat-Shamir signatures offer security against quantum attacks (in the QROM) if the underlying \sigp is a proof of knowledge against quantum attacks and satisfies a few additional natural assumptions (Theorem~\ref{thm:CMA}). 
	
	As a concrete application, assuming the hash function used in creating the commitments is collapsing, we can conclude that the non-optimized version of \textsf{Fish}, which is the Fiat-Shamir variant of \textsf{Picnic}, is secure in the QROM.
	
	\subsubsection{Comparison with concurrent results. }
	
	In concurrent and independent work\switch{~}{\ }\cite{LZ19}\footnote{The paper~\cite{LZ19} was put on eprint  ({\tt ia.cr/2019/262}) a few days after our eprint version ({\tt ia.cr/2019/190}). }, Liu and Zhandry show results that are very similar to ours: they also show the security of the Fiat-Shamir transformation in the QROM, and they introduce a similar stronger version of the computational unique responses property in order to argue that a \sigp is a (computational) proof of knowledge against a quantum adversary. 
	In short, \cite{LZ19} differs from the work here in the following aspects. In~\cite{LZ19}, the result on the Fiat-Shamir transformation is obtained using a very different approach, resulting in a greater loss in the reduction: $O(q^9)$ compared to the $O(q^2)$ loss that we obtain. On the other hand, on the quantum proof of knowledge front, Liu and Zhandry introduce some additional techniques that, for instance, allow them to prove that the \sigp underlying \textsf{Dilithium} satisfies (their variant) of the newly introduced strong version of the computational unique responses property, while we phrase this as a conjecture in order to conclude the security of (some variant of) the \textsf{Dilithium} signature scheme.

	\section{Reprogramming the Quantum Random Oracle}\label{secRepr}

	We show and analyze a particular way to reprogram a random oracle in the quantum setting, where the oracle can be queried in superposition. 

	\subsection{Notation}
	
	We consider a quantum oracle algorithm $\cal A$ that makes $q$ queries to an {\em oracle}, i.e., an unspecified function $H: {\cal X} \to {\cal Y}$ with finite non-empty sets ${\cal X},{\cal Y}$.   
	We may assume without loss of generality that $\cal A$ makes no intermediary measurements. 
	Formally, $\cal A$ is then described by a sequence of unitaries $A_1,\ldots,A_q$ and an initial state $\ket{\phi_0}$.%
	\footnote{Alternatively, we may understand $\ket{\phi_0}$ as an auxiliary input given to $\cal A$. }
	The unitaries $A_i$ act on registers $\regX,\regY,\regZ,\regE$, where $\regX$ and $\regY$ have respective $|{\cal X}|$- and $|{\cal Y}|$-dimensional state spaces, while $\regZ$ and $\regE$ is arbitrary. As will become clear, $\regX$ and $\regY$ are the quantum registers for the queries to $H$ as well as for the final output~$x$, $\regZ$ is for the output $z$, and $\regE$ is internal memory.
	For any concrete choice of $H: {\cal X} \to {\cal Y}$, we can write 
	$$
	{\cal A}^H \ket{\phi_0} := A_q\mathcal{O}^H \cdots A_1\mathcal{O}^H \ket{\phi_0} \, ,
	$$ 
	for the execution of $\cal A$ with the oracle instantiated by $H$, where $\mathcal{O}^H$ is the unitary $\mathcal{O}^H : \ket{x}\ket{y} \mapsto \ket{x}\ket{y \oplus H(x)}$ that acts on registers $\regX$ and $\regY$. 
	
	It will be convenient to introduce the following notation. For $0 \leq i,j \leq q$ we set 
	$$
	\mathcal{A}_{i\rightarrow j}^H := A_{j}\mathcal{O}^H \cdots A_{i+1}\mathcal{O}^H
	$$
	with the convention that $\mathcal{A}_{i\rightarrow j}^H := \mathbb{1}$ for $j \leq i$. Furthermore, we set
	$$
	\ket{\phi_i^H} := \big(\mathcal{A}_{0\rightarrow i}^H\big)\ket{\phi_0} 
	$$ 
	to be the state of $\cal A$ after the $i$-th step but right before the $(i+1)$-st query, and so that $\ket{\phi_q^H}$ equals $\big({\cal A}_{0\rightarrow q}^H\big)\ket{\phi_0}$ = ${\cal A}^H \ket{\phi_0}$, the output state produced by $\cal A$. 
	
	Finally, for a given function $H: {\cal X} \to {\cal Y}$ and for fixed $x \in {\cal X}$ and $\Theta \in {\cal Y}$, we define the {\em reprogrammed} function $H\!*\!\Theta x: {\cal X} \to {\cal Y}$ that coincides with $H$ on ${\cal X} \setminus \{x\}$ but maps $x$ to $\Theta$. With this notation at hand, we can then write
	$$
	\big(\mathcal{A}_{i\rightarrow q}^{H*\Theta x}\big) \, \big(\mathcal{A}_{0\rightarrow i}^{H}\big) \, \ket{\phi_0} = \big(\mathcal{A}_{i\rightarrow q}^{H*\Theta x}\big)\ket{\phi_i^H}
	$$
	for an execution of $\cal A$ where the oracle is reprogrammed at a given point $x$ after the $i$-th query.

	
	We are interested in the probability that after the execution of ${\cal A}^H$ and upon measuring register $\regX$ in the computational basis to obtain $x \in {\cal X}$, the state of register $\regZ$ is of a certain form dependent on $x$ and $H(x)$.  
	This relation is captured by a projection $G_x^H$, where, more generally, for $x,x' \in \cal X$ and $\Theta \in \cal Y$ we set
	$$
	G_{x,x'}^{\Theta} = \proj{x'} \otimes \mathbb{1} \otimes \Pi_{x,\Theta} \otimes \mathbb{1} \, ,
	$$
	where $\{\Pi_{x,\Theta}\}_{x \in {\cal X},\Theta \in {\cal Y}}$ is a family of projections acting on $\regZ$, which we refer to as a \emph{quantum predicate}. We use the short hands $G_x^{\Theta}$ for $G_{x,x}^{\Theta}$ and $G_x^H$ for \smash{$G_x^{H(x)}$}, i.e., 
	$$
	G_x^{H} = \proj{x} \otimes \mathbb{1} \otimes \Pi_{x,H(x)} \otimes \mathbb{1} \, .
	$$
	For an arbitrary but fixed $x_\circ \in {\cal X}$, we then consider the probability
	$$
	\|G_{x_\circ}^H \ket{\phi_q^H}\|_2^2 \, .
	$$
	Understanding ${\cal A}^H$ as an algorithm that outputs the measured $x$ together with the state $z$ in register $\regZ$, we will denote this probability also by
	$$
	\Pr\bigr[x\!=\! x_\circ \wedge V(x,H(x),z) : (x,z) \leftarrow {\cal A}^H \bigl] \, ,
	$$
	understanding $V$ to be a quantum predicate specified by the projections $\Pi_{x,H(x)}$. 

	\subsection{Main technical result}\label{secmainresult}
	
	We consider a quantum oracle algorithm $\mathcal{A}$ as formalized above, and we define a two-stage algorithm $\mathcal{S}$ with black-box access to $\mathcal{A}$ as follows. In the first stage, $\mathcal{S}$ tries to predict $\mathcal{A}$'s future output $x$, and then, upon input a (random) $\Theta$, in the second stage tries to output what $\mathcal{A}$ is supposed to output, but now with respect to $\Theta$ instead of $H(x)$. 
	
	$\mathcal{S}$ works by running $\mathcal{A}$, but with the following modifications. First, one of the $q+1$ queries of $\mathcal{A}$ (also counting the final output in register $\regX$) is selected uniformly at random and this query is measured, and the measurement outcome $x$ is output by (the first stage of) $\mathcal{S}$. Then, this very query of $\cal A$ is answered either using the original $H$ {\em or} using the reprogrammed oracle $H*\Theta x$, with the choice being made at random, while all the remaining queries of $\cal A$ are answered using oracle $H\!*\!\Theta x$.%
	\footnote{If it is the final output that is measured then there is nothing left to reprogram. } 
	Finally, (the second stage of) $\cal S$ outputs whatever $\mathcal{A}$ outputs.
	
	Here, the figure of merit is the probability that
	for a fixed $x$,  both the intermediate measurement and a measurement of the register $\regX$ return  $x$ \emph{and} that the register $\regZ$ contains a state
	that satisfies the considered quantum predicate with respect to $x$ and its (now reprogrammed) hash value $\Theta$. Formally, this probability is captured by
	
	$$
	\E_{\Theta,i,b} \Bigl[\big\|G_x^\Theta \, \big(\mathcal{A}_{i+b\rightarrow q}^{H*\Theta x}\big) \, \big(\mathcal{A}_{i\rightarrow i+b}^{H}\big) \, X\ket{\phi_{i}^H}\big\|_2^2 \Bigr]
	$$
	where here and from now on, we use $X$ as a short hand for the projection $\proj{x}$ acting on~$\regX$. 
	The expectation is taken over $\Theta\in\mathcal Y$, $i\in\{0,...,q\}$ and $b\in\{0,1\}$ uniformly random. Note that the random bit $b \in \{0,1\}$ determines whether the measured query is answered with $H$ or with $H\!*\!\Theta x$. 
	
	We write ${\cal S}^{\cal A}[H]$ to emphasize that ${\cal S}$ only makes black-box access to ${\cal A}$ and that it depends on $H$. Our main technical lemma below then ensures that for any $H$ and for a random $\Theta \in \cal Y$, the success probability of ${\cal S}^{\cal A}[H]$ is up to an order-$q^2$ loss not much smaller than that of ${\cal A}^{H * \Theta x}$, and therefore not much smaller than that of ${\cal A}^{H}$ in case of a random $H$. 
	
	\begin{lem}
		\label{lem:mainresult}
		For any $H: {\cal X}\rightarrow {\cal Y}$ and  $x \in \cal X$, it holds that
		\begin{align*}
		\E_{\Theta,i,b}\left[\big\|G_x^{\Theta} \big(\mathcal{A}_{i+b\rightarrow q}^{H*\Theta x}\big)\big(\mathcal{A}_{i\rightarrow i+b}^{H}\big)X\ket{\phi_i^H}\big\|_2^2\right] 
		&\geq  \frac{\E_{\Theta}\Bigl[\big\|G_x^{\Theta} \ket{\phi_q^{H*\Theta x}}\big\|_2^2\Bigr]}{2(q+1)(2q+3) } - \frac{\big\|X\ket{\phi_q^H}\big\|_2^2}{2(q+1)|{\cal Y}| }.
		\end{align*}
		where the expectation is over random $\Theta \in {\cal Y}$, $i \in \{0,\ldots,q\}$ and $b \in \{0,1\}$.%
		\footnote{We consider $|{\cal Y}|$ to be superpolynomial in the security parameter, so that $\frac{1}{2(q+1)|{\cal Y}|}$ is negligible and can be neglected. In cases where $|{\cal Y}|$ is polynomial, the presented bound is not optimal, but an improved bound can be derived with the same kind of techniques.}
	\end{lem}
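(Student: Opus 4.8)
The plan is to tie the simulator's two branches (indexed by the bit $b$) to a telescoping of hybrid executions, and then read off the order-$q^2$ loss from a Cauchy--Schwarz estimate. Fix $\Theta$ and, for $0\le i\le q$, introduce the hybrid state $\ket{\psi_i}:=\mathcal A_{i\to q}^{H*\Theta x}\ket{\phi_i^H}$, in which the first $i$ queries are answered with $H$ and the remaining ones with $H*\Theta x$; its two endpoints are exactly the fully reprogrammed output $\ket{\psi_0}=\ket{\phi_q^{H*\Theta x}}$ and the original output $\ket{\psi_q}=\ket{\phi_q^{H}}$. I also abbreviate the two simulator branches by $\ket{\alpha_i}:=\mathcal A_{i\to q}^{H*\Theta x}X\ket{\phi_i^H}$ (the $b=0$ term) and $\ket{\beta_i}:=\mathcal A_{i+1\to q}^{H*\Theta x}A_{i+1}\mathcal O^{H}X\ket{\phi_i^H}$ (the $b=1$ term), so that the left-hand side is precisely $\tfrac{1}{2(q+1)}\sum_{i=0}^q\bigl(\|G_x^\Theta\ket{\alpha_i}\|_2^2+\|G_x^\Theta\ket{\beta_i}\|_2^2\bigr)$.

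The key identity, and the step I expect to be the crux, is
$$
\ket{\psi_i}-\ket{\psi_{i+1}}=\ket{\alpha_i}-\ket{\beta_i}\qquad(0\le i\le q-1).
$$
This is where the seemingly counter-intuitive random bit $b$ pays off: the two simulator branches are exactly the two ends of one telescoping step. To see it, factor out the common evolution $\mathcal A_{i+1\to q}^{H*\Theta x}A_{i+1}$; both sides then reduce to $(\mathcal O^{H*\Theta x}-\mathcal O^{H})$ applied to $\ket{\phi_i^H}$ respectively to $X\ket{\phi_i^H}$. Since $H$ and $H*\Theta x$ agree off $x$, the operator $\mathcal O^{H*\Theta x}-\mathcal O^{H}$ annihilates the complement of the $\ket{x}$-branch, so $(\mathcal O^{H*\Theta x}-\mathcal O^{H})\ket{\phi_i^H}=(\mathcal O^{H*\Theta x}-\mathcal O^{H})X\ket{\phi_i^H}$ and the two differences coincide. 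Summing over $i$ telescopes to
$$
\ket{\phi_q^{H*\Theta x}}=\ket{\phi_q^{H}}+\sum_{i=0}^{q-1}\bigl(\ket{\alpha_i}-\ket{\beta_i}\bigr).
$$

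From here I would apply $G_x^\Theta$, take the $2$-norm, and invoke the triangle inequality followed by Cauchy--Schwarz over the resulting order-$q$ many terms. This bounds $\|G_x^\Theta\ket{\phi_q^{H*\Theta x}}\|_2^2$ by a factor of order $q^2$ times $\sum_{i}\bigl(\|G_x^\Theta\ket{\alpha_i}\|_2^2+\|G_x^\Theta\ket{\beta_i}\|_2^2\bigr)$, which is $2(q+1)$ times the left-hand side; rearranging yields the claimed lower bound with denominator $2(q+1)(2q+3)$, and taking the expectation over $\Theta$ turns $\|G_x^\Theta\ket{\phi_q^{H*\Theta x}}\|_2^2$ into the numerator on the right.

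The one genuinely delicate point is the boundary term $\ket{\phi_q^{H}}$ produced by the telescoping: it gets tested against the reprogrammed predicate $\Pi_{x,\Theta}$ even though $\ket{\phi_q^H}$ is itself independent of $\Theta$. It cannot simply be discarded, since in the rearrangement it carries a negative sign and dropping it would invalidate the lower bound; instead one must bound its contribution, and it is this accounting that pins down the exact constant and produces the additive error. Averaging over the uniformly random fresh value $\Theta\in\mathcal Y$, its contribution is controlled, after the $\tfrac{1}{2(q+1)}$ normalisation, by $\tfrac{\|X\ket{\phi_q^H}\|_2^2}{2(q+1)|\mathcal Y|}$, which is negligible for superpolynomial $|\mathcal Y|$. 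I therefore expect the conceptual work to sit entirely in the telescoping identity, with the constant-chasing and the treatment of this boundary term being the fiddly, calculational remainder.
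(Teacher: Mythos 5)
Your telescoping identity is exactly the decomposition the paper uses (its displayed identity for $\big(\mathcal{A}_{i+1\rightarrow q+1}^{H*\Theta x}\big)\ket{\phi_{i+1}^H}$ is your $\ket{\psi_{i+1}}=\ket{\psi_i}-\ket{\alpha_i}+\ket{\beta_i}$, shifted by one extra step), the justification via $\mathcal{O}^{H*\Theta x}-\mathcal{O}^H$ annihilating the $(\mathbb{1}-X)$-branch is the paper's, and the triangle-inequality-plus-Jensen step is the same. The gap is in the one place you flagged as delicate and then merely asserted: the boundary term. For the plain projector $G_x^\Theta=\proj{x}\otimes\mathbb{1}\otimes\Pi_{x,\Theta}\otimes\mathbb{1}$ there is no reason why $\E_\Theta\bigl[\|G_x^\Theta\ket{\phi_q^H}\|_2^2\bigr]$ should be of order $\|X\ket{\phi_q^H}\|_2^2/|{\cal Y}|$; taking $\Pi_{x,\Theta}=\mathbb{1}$ for all $\Theta$ gives $\E_\Theta\bigl[\|G_x^\Theta\ket{\phi_q^H}\|_2^2\bigr]=\|X\ket{\phi_q^H}\|_2^2$, so averaging over $\Theta$ buys you nothing by itself. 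With honest accounting your argument only yields an additive error of $\|X\ket{\phi_q^H}\|_2^2/\big(2(q+1)\big)$, which is not negligible and would swallow the main term.

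The paper's fix is a device your outline is missing: it appends one artificial extra query $\mathcal{A}^H_{q\to q+1}=\mathcal{O}^H$ (after arranging WLOG that the $\regY$-register of the output is $\ket{0}$) and runs the telescoping with the enhanced projector $\tilde G_x^\Theta = G_x^\Theta(\mathbb{1}\otimes\proj{\Theta}\otimes\mathbb{1}\otimes\mathbb{1})$, which additionally tests whether $\regY$ contains $\Theta$. On the reprogrammed side nothing is lost, since the extra query writes $\Theta$ into $\regY$ on the $\ket{x}$-branch, so $\|\tilde G_x^\Theta\ket{\phi_{q+1}^{H*\Theta x}}\|_2^2=\|G_x^\Theta\ket{\phi_q^{H*\Theta x}}\|_2^2$; but the boundary term now satisfies $\E_\Theta\bigl[\|\tilde G_x^\Theta\ket{\phi_{q+1}^H}\|_2^2\bigr]\le\E_\Theta\bigl[\|(X\otimes\proj{\Theta})\mathcal{O}^H\ket{\phi_q^H}\|_2^2\bigr]=\|X\ket{\phi_q^H}\|_2^2/|{\cal Y}|$ because $\sum_\Theta\proj{\Theta}=\mathbb{1}$. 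This is where the $1/|{\cal Y}|$ actually comes from (and the extra step is also why the denominator is $2q+3$ rather than the $2q+1$ your count would give). Apart from this missing mechanism, your outline matches the paper's proof.
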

	\begin{proof}
		We assume that the $\regY$-register of $\ket{\phi_q^H} = \big(\mathcal{A}_{0\rightarrow q}^H\big)\ket{\phi_0}$ is $\ket{0}$ no matter what $H$ is; this is without loss of generality since it can always be achieved by an insignificant modification to $\cal A$, i.e., by swapping $\regY$ with a default register within $\regE$. 
		For the purpose of the proof, we introduce an additional step ${\cal A}^H_{q \to q+1}$ that simply applies ${\cal O}^H$, and we expand the notions of $\ket{\phi_j^H}$ and ${\cal A}^H_{i \to j}$ to allow $j = q+1$. 
		Finally, we ``enhance'' $G_x^\Theta$ to%
		\footnote{Informally, these modifications mean that we let $\cal A$ make one more query to get $H(x)$ into register $\regY$, and $\tilde G_x^{H(x)}$ would then check that $\regY$ indeed contains $H(x)$. } 
		$$
		\tilde G_x^\Theta := G_x^\Theta (\mathbb{1} \otimes \proj{\Theta} \otimes \mathbb{1} \otimes \mathbb{1}) = X \otimes \proj{\Theta} \otimes \Pi_{x,\Theta} \otimes \mathbb{1} \, .
		$$
		For any $0\leq i \leq q$, inserting a resolution of the identity and exploiting that
		$$
		\big(\mathcal{A}_{i+1\rightarrow q+1}^{H*\Theta x}\big)\big(\mathcal{A}_{i\rightarrow i+1}^H\big)\big(\mathbb{1}-X\big)\ket{\phi_i^H} = \big(\mathcal{A}_{i\rightarrow q+1}^{H*\Theta x}\big)\big(\mathbb{1}-X\big)\ket{\phi_i^H} \, ,
		$$ 
		we can write
		\switch{
			\begin{align*}
			\big(\mathcal{A}_{i+1\rightarrow q+1}^{H*\Theta x}\big)\ket{\phi_{i+1}^H} 
			&=  \big(\mathcal{A}_{i+1\rightarrow q+1}^{H*\Theta x}\big)\big(\mathcal{A}_{i\rightarrow i+1}^H\big)\big(\mathbb{1}-X\big)\ket{\phi_i^H} \hspace{-8ex}\!\!&\!\! +\: \big(\mathcal{A}_{i+1\rightarrow q+1}^{H*\Theta x}\big)\big(\mathcal{A}_{i\rightarrow i+1}^H\big)X\ket{\phi_i^H} \notag\\
			&= \big(\mathcal{A}_{i\rightarrow q+1}^{H*\Theta x}\big)\big(\mathbb{1}-X\big)\ket{\phi_i^H} \hspace{-8ex}\!\!&\!\! +\: \big(\mathcal{A}_{i+1\rightarrow q+1}^{H*\Theta x}\big)\big(\mathcal{A}_{i\rightarrow i+1}^H\big)X\ket{\phi_i^H}\\
			&= \big(\mathcal{A}_{i\rightarrow q+1}^{H*\Theta x}\big)\ket{\phi_i^H} - \big(\mathcal{A}_{i\rightarrow q+1}^{H*\Theta x}\big)X\ket{\phi_i^H} \hspace{-8ex}\!\!&\!\! +\: \big(\mathcal{A}_{i+1\rightarrow q+1}^{H*\Theta x}\big)\big(\mathcal{A}_{i\rightarrow i+1}^H\big)X\ket{\phi_i^H}
			\end{align*}
		}{
			\begin{align*}
			&\big(\mathcal{A}_{i+1\rightarrow q+1}^{H*\Theta x}\big)\ket{\phi_{i+1}^H} &\\
			&\hspace{30pt}=  \big(\mathcal{A}_{i+1\rightarrow q+1}^{H*\Theta x}\big)\big(\mathcal{A}_{i\rightarrow i+1}^H\big)\big(\mathbb{1}-X\big)\ket{\phi_i^H} \!\!&\!\! +\: \big(\mathcal{A}_{i+1\rightarrow q+1}^{H*\Theta x}\big)\big(\mathcal{A}_{i\rightarrow i+1}^H\big)X\ket{\phi_i^H} \notag\\
			&\hspace{30pt}= \big(\mathcal{A}_{i\rightarrow q+1}^{H*\Theta x}\big)\big(\mathbb{1}-X\big)\ket{\phi_i^H} \!\!&\!\! +\: \big(\mathcal{A}_{i+1\rightarrow q+1}^{H*\Theta x}\big)\big(\mathcal{A}_{i\rightarrow i+1}^H\big)X\ket{\phi_i^H}\\
			&\hspace{30pt}= \big(\mathcal{A}_{i\rightarrow q+1}^{H*\Theta x}\big)\ket{\phi_i^H} - \big(\mathcal{A}_{i\rightarrow q+1}^{H*\Theta x}\big)X\ket{\phi_i^H} \!\!&\!\! +\: \big(\mathcal{A}_{i+1\rightarrow q+1}^{H*\Theta x}\big)\big(\mathcal{A}_{i\rightarrow i+1}^H\big)X\ket{\phi_i^H}
			\end{align*}
		}
		Rearranging terms, applying $\tilde G_{x}^\Theta$ and using the triangle equality, we can thus bound
		\begin{align*}
		\big\| \tilde G_{x}^\Theta  \big(\mathcal{A}_{i\rightarrow q+1}^{H*\Theta x}\big) \ket{\phi_i^H} \big\|_2 
		\leq \big\|\tilde G_{x}^\Theta & \big(\mathcal{A}_{i+1\rightarrow q+1}^{H*\Theta x}\big)\ket{\phi_{i+1}^H}\big\|_2 \\& + \big\| \tilde G_{x}^\Theta \big(\mathcal{A}_{i\rightarrow q+1}^{H*\Theta x}\big)X\ket{\phi_i^H}\big\|_2\\ & \qquad + \big\| \tilde G_{x}^\Theta \big(\mathcal{A}_{i+1\rightarrow q+1}^{H*\Theta x}\big)\big(\mathcal{A}_{i\rightarrow i+1}^H\big)X\ket{\phi_i^H}\big\|_2 \, .
		\end{align*}
		Summing up the respective sides of the inequality over $i=0,\ldots,q$, we get
		\begin{equation*}
		\big\|\tilde G_x^\Theta\ket{\phi_{q+1}^{H*\Theta x}}\big\|_2 \:\leq\: \big\| \tilde G_x^\Theta\ket{\phi_{q+1}^H}\big\|_2 + \!\!\!\sum_{0\leq i \leq q \atop b\in \{0,1\}}\!\!\! \big\|\tilde G_x^\Theta \big(\mathcal{A}_{i+b\rightarrow q+1}^{H*\Theta x}\big)\big(\mathcal{A}_{i\rightarrow i+b}^H\big)X\ket{\phi_i^H}\big\|_2 \, .\label{eqnsquarthis}
		\end{equation*}
		By squaring both sides, dividing by $2q+3$ (i.e., the number of terms on the right hand side), and using Jensen's inequality on the right hand side, we obtain 
		$$
		\frac{\big\|\tilde G_x^\Theta\ket{\phi_{q+1}^{H*\Theta x}}\big\|_2^2}{2q+3} \leq \big\|\tilde G_x^\Theta\ket{\phi_{q+1}^H}\big\|_2^2 + \!\!\!\sum_{0\leq i \leq q \atop b\in \{0,1\}}\!\!\!\big\| \tilde G_x^\Theta \big(\mathcal{A}_{i+b\rightarrow q+1}^{H*\Theta x}\big)\big(\mathcal{A}_{i\rightarrow i+b}^H\big)X\ket{\phi_i^H}\big\|_2^2  
		$$
		and thus
		\begin{equation}\label{eq:intermediate0}
		\E_{i,b}\left[\big\|\tilde G_x^\Theta \big(\mathcal{A}_{i+b\rightarrow q+1}^{H*\Theta x}\big)\big(\mathcal{A}_{i\rightarrow i+b}^H\big)X\ket{\phi_i^H}\big\|_2^2\right] \;\geq\; \frac{\big\|\tilde G_x^\Theta\ket{\phi_{q+1}^{H*\Theta x}}\big\|_2^2}{2(q+1)(2q+3)} - \frac{\big\|\tilde G_x^\Theta\ket{\phi_{q+1}^H}\big\|_2^2}{2(q+1)} \, .
		\end{equation}
		Since both $\proj{\Theta}$ and $\mathcal{A}_{q\rightarrow q+1}^{H*\Theta x} = {\cal O}^{H*\Theta x}$ commute with $G_x^\Theta$, we get 
		\begin{align}
		\big\|\tilde G_x^\Theta \big(\mathcal{A}_{i+b\rightarrow q+1}^{H*\Theta x}\big)\big(\mathcal{A}_{i\rightarrow i+b}^H\big)X\ket{\phi_i^H}\big\|_2^2 \,&\leq 
		\big\|G_x^\Theta \big(\mathcal{A}_{i+b\rightarrow q+1}^{H*\Theta x}\big)\big(\mathcal{A}_{i\rightarrow i+b}^H\big)X\ket{\phi_i^H}\big\|_2^2 \nonumber \\ 
		&= 
		\big\|G_x^\Theta \big(\mathcal{A}_{i+b\rightarrow q}^{H*\Theta x}\big)\big(\mathcal{A}_{i\rightarrow i+b}^H\big)X\ket{\phi_i^H}\big\|_2^2 \, .\label{eq:intermediate1}
		\end{align}
		Also, because $(X \otimes \proj{\Theta})  {\cal O}^{H*\Theta x} = (X \otimes \mathbb{1}) {\cal O}^{H*\Theta x}$, and ${\cal O}^{H*\Theta x}$ commutes with $G_x^\Theta$, we get
		\begin{equation}\label{eq:intermediate2}
		\bigl\|\tilde G_x^\Theta\ket{\phi_{q+1}^{H*\Theta x}}\bigr\|_2^2 = \bigl\| G_x^\Theta\ket{\phi_{q}^{H*\Theta x}}\bigr\|_2^2 \, .
		\end{equation}
		Finally,
		\begin{equation}\label{eq:intermediate3}
		\E_{\Theta}\left[\bigl\|\tilde G_x^\Theta\ket{\phi_{q+1}^H}\bigr\|_2^2\right] \leq \E_{\Theta}\left[\bigl\| (X \otimes \proj{\Theta}) {\cal O}^H \ket{\phi_q^H}\bigr\|_2^2\right] \leq \frac{1}{|{\cal Y}|} \bigl\| X \ket{\phi_q^H}\bigr\|_2^2\, .
		\end{equation}
		Inserting \eqref{eq:intermediate1}, \eqref{eq:intermediate2} and \eqref{eq:intermediate3} into  \eqref{eq:intermediate0} yields the claimed result.
		\qed
	\end{proof}

	\subsection{Switching notation, and simulating the random oracle}
	Introducing more algorithmic-probabilistic notation, we write
	$$
	(x,x',z) \leftarrow \langle{\cal S}^{\cal A}[H] , \Theta\rangle
	$$
	to specify the probability space determined as follows, relying on the above construction of the two-stage algorithm $\cal S$ when given $\cal A$. In the first stage ${\cal S}^{\cal A}[H]$ produces $x$, and then in the second stage, upon receiving $\Theta$, it produces $x'$ and~$z$, where $z$ may be quantum. Our figure of merit above, i.e., the left hand side of the bound in Lemma~\ref{lem:mainresult} (with $x$ replaced by $x_\circ$), is then denoted by 
	$$
	\Pr_\Theta\bigr[x\!=\! x_\circ \wedge x'\!=\! x_\circ \wedge V(x,\Theta,z) : (x,x',z) \leftarrow \langle{\cal S}^{\cal A}[H] , \Theta\rangle\bigl]  \, ,
	$$
	where the subscript $\Theta$ in $\Pr_\Theta$ denotes that the probability is averaged over a random choice of $\Theta$.
	
	Using this notation, but also weakening the bound slightly by not requiring $x' = x_\circ$, for any $H$ and $x_\circ$ the bound from Lemma~\ref{lem:mainresult} then becomes 
	\begin{align*}
	\Pr_\Theta\bigr[x\!=\! x_\circ& \wedge V(x,\Theta,z) : (x,z) \leftarrow \langle{\cal S}^{\cal A}[H] , \Theta\rangle\bigl] 
	\switch{\;}{\\&}
	\gtrsim \frac{1}{O(q^2)} \Pr_\Theta\bigr[x\!=\! x_\circ \wedge V(x,H(x),z) : (x,z) \leftarrow {\cal A}^{H*\Theta x} \bigl]
	\end{align*}
	\sloppy where the approximate inequality $\gtrsim$ hides the term
	\switch{
		$\frac{1}{ 2(q+1)|{\cal Y}|}\Pr_H\bigl[x\!=\! x_\circ : (x,z) \leftarrow {\cal A}^{H}\bigr].$
	}
	{
		$$\frac{1}{ 2(q+1)|{\cal Y}|}\Pr_H\bigl[x\!=\! x_\circ : (x,z) \leftarrow {\cal A}^{H}\bigr].$$
	}  
	Recall that the output $z$ may be a quantum state, in which case the predicate $V$ is given by a measurement that depends on $x$, and $H(x)$ or $\Theta$,  respectively. 
	
	We fix a family $\cal H$ of $2(q+1)$-wise independent hash functions and average the above inequality over a random choice of $H \in \cal H$ from this family. We simply write $\cal S$ for ${\cal S}[H]$ with $H$ chosen like that. Furthermore, we observe that, for any fixed $x$, the family $\{H *\Theta x \,|\, H \in {\cal H}, \Theta \in \{0,1\}^n\}$ is a family of $2(q+1)$-wise independent hash functions as well. Finally, we use that $\cal A$ (together with the check $V(x,H(x),z)$) cannot distinguish a random function $H\!*\!
	\Theta x$ in that family from a fully random function $H$ \cite{Zhandry2012a}.  This gives us the following variation of Lemma~\ref{lem:mainresult}, which we state as our main technical theorem:

	\begin{thm}[Measure-and-reprogram]\label{thmmain} 
		Let ${\cal X},{\cal Y}$ be finite non-empty sets. 
		There exists a black-box polynomial-time two-stage quantum algorithm $\cal S$ with the following property. 
		Let $\cal A$ be an arbitrary oracle quantum algorithm that makes $q$ queries to a uniformly random $H: {\cal X}\rightarrow {\cal Y}$ and that outputs some $x \in {\cal X}$ and a (possibly quantum) output~$z$. Then, the two-stage algorithm ${\cal S}^{\cal A}$ outputs some $x \in {\cal X}$ in the first stage and, upon a random $\Theta \in {\cal Y}$ as input to the second stage, a (possibly quantum) output $z$, so that for any \mbox{$x_\circ\in {\cal X}$} and any predicate%
		\footnote{We recall that in case $z$ is a quantum state, $V$ is given by means of a measurement. }
		$V$:
		\begin{align*}
		\Pr_\Theta\bigr[x\!=\! x_\circ& \wedge V(x,\Theta,z) : (x,z) \leftarrow \langle{\cal S}^{\cal A} , \Theta\rangle\bigl] 
		\switch{\;}{\\ &}
		\gtrsim \frac{1}{O(q^2)} \Pr_H\bigl[x\!=\! x_\circ \wedge V(x,H(x),z) : (x,z) \leftarrow {\cal A}^{H} \bigr] \, ,
		\end{align*}
		where the $\gtrsim$ hides a term that is bounded by $\frac{1}{2q|{\cal Y}|}$ when summed over all $x_\circ$.\footnote{Note added: In follow-up work, \cite{DFM20} proves a slightly improved version of this theorem that avoids the (negligible) additive error term. As a consequence, the additive error term in Theorem \ref{thmFS} can be avoided as well.}%
	\end{thm}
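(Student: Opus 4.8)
The plan is to derive Theorem~\ref{thmmain} from Lemma~\ref{lem:mainresult} in two moves: first re-express the fixed-$H$ bound of the Lemma in the algorithmic notation for the two-stage algorithm $\mathcal S$, and then replace the idealized uniformly random oracle that $\mathcal S$ would need to simulate by an efficiently samplable $2(q+1)$-wise independent function, using Zhandry's result to argue that $\mathcal A$ cannot notice the difference.

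First I would pin down $\mathcal S$ exactly as informally described above: it picks $i\in\{0,\dots,q\}$ and $b\in\{0,1\}$ uniformly, runs $\mathcal A$ up to its $i$-th query, measures that query (or the final output, if $i=q$) to obtain $x$ as its first-stage output, then upon receiving $\Theta$ answers the measured query with $H$ or $H*\Theta x$ according to $b$ and all later queries with $H*\Theta x$, finally returning $\mathcal A$'s output. With this construction the left-hand side of Lemma~\ref{lem:mainresult} (renaming $x$ to $x_\circ$) is by definition $\Pr_\Theta[x=x_\circ\wedge x'=x_\circ\wedge V(x,\Theta,z)]$, where $x$ and $x'$ are the intermediate and final measurement outcomes. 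Dropping the clause $x'=x_\circ$ can only enlarge this probability, so the same lower bound holds for $\Pr_\Theta[x=x_\circ\wedge V(x,\Theta,z)]$, which is the theorem's left-hand side.

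The core of the argument is rewriting the right-hand side. Since $\Theta=(H*\Theta x)(x)$, the numerator $\E_\Theta\big[\|G_x^\Theta\ket{\phi_q^{H*\Theta x}}\|_2^2\big]$ equals $\E_\Theta\Pr[x=x_\circ\wedge V(x,\tilde H(x),z):(x,z)\leftarrow\mathcal A^{\tilde H}]$ with $\tilde H=H*\Theta x$. To convert this into a statement about a \emph{truly} random oracle I would average the whole inequality over $H$ drawn from a fixed family $\mathcal H$ of $2(q+1)$-wise independent functions; because such families have polynomial-size description and polynomial-time evaluation, simulating the oracle this way keeps $\mathcal S$ polynomial-time, giving the efficiency claim. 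The crucial structural observation is that for any fixed $x$ the reprogrammed family $\{H*\Theta x\mid H\in\mathcal H,\ \Theta\in\mathcal Y\}$ is itself $2(q+1)$-wise independent, so as $H$ and $\Theta$ range uniformly $\tilde H$ is a uniformly random member of a $2(q+1)$-wise independent family, and the averaged right-hand side becomes an average over this family of the success probability of $\mathcal A^{\tilde H}$. Applying Zhandry's theorem~\cite{Zhandry2012a}\,---\,a $q$-query quantum algorithm cannot distinguish a uniformly random oracle from a $2q$-wise independent one\,---\,then lets me replace $\tilde H$ by a fully random $H$, yielding $\frac{1}{O(q^2)}\Pr_H[x=x_\circ\wedge V(x,H(x),z):(x,z)\leftarrow\mathcal A^H]$. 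Summing the additive term $\frac{1}{2(q+1)|\mathcal Y|}\Pr[x=x_\circ]$ over all $x_\circ$ gives $\frac{1}{2(q+1)|\mathcal Y|}\le\frac{1}{2q|\mathcal Y|}$, as claimed.

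I expect the main obstacle to be the bookkeeping in the indistinguishability step. One must verify that the reprogrammed family is genuinely $2(q+1)$-wise independent by splitting into the cases where $x$ is or is not among the chosen evaluation points, and one must correctly account for the extra oracle access hidden in the predicate check $V(x,H(x),z)$: the verification queries $H$ at the output point $x$, so the combined process of $\mathcal A$ together with its check makes effectively $q+1$ evaluations, which is exactly why $2(q+1)$-wise independence\,---\,rather than $2q$-wise\,---\,is needed for Zhandry's bound to apply to the whole process rather than to $\mathcal A$ in isolation.
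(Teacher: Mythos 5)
Your proposal is correct and follows essentially the same route as the paper: restate Lemma~\ref{lem:mainresult} in the algorithmic notation for $\mathcal S$, drop the clause $x'=x_\circ$, identify the numerator on the right-hand side as the success probability of $\mathcal A^{H*\Theta x}$, average over a $2(q+1)$-wise independent family, note that the reprogrammed family remains $2(q+1)$-wise independent, and invoke Zhandry's simulation theorem (counting the extra evaluation hidden in the check $V(x,H(x),z)$). The handling of the additive error term by summing over $x_\circ$ also matches the paper.
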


	\begin{rem}\label{rem:BB}
		We do not spell out in detail what it means for a quantum algorithm like $\cal S$ to be {\em black-box}; see e.g.~\cite{Unruh2017} for a rigorous definition. What we obviously need here is that ${\cal S}^{\cal A}$ has access to $\cal A$'s initial state $\ket{\phi_0}$ and to $q$, and is given black-box access to the unitaries $A_i$. Furthermore, for later purposes, we need the following composition property: if ${\cal S}$ is a black-box algorithm with access to ${\cal A}$, and ${\cal K}$ is a black-box algorithm with access to ${\cal S}^{\cal A}$, then there exists a black-box algorithm ${\cal K}^{\cal S}$ with access to $\cal A$ so that \smash{$({\cal K}^{\cal S})^{\cal A} = {\cal K}^{({\cal S}^{\cal A})}$}. 
	\end{rem}

	\section{Security of the Fiat-Shamir Transformation}

	In this section, we show how to reduce security of the Fiat-Shamir transformation to the security of the underlying \sigp: any dishonest prover attacking the Fiat-Shamir transformation can be turned into a dishonest prover that succeeds to break the underlying \sigp with the same probability up to a polynomial loss. This reduction is obtained by a straightforward application of Theorem~\ref{thmmain}. 
	Our security reduction holds very generically and is not strongly tight to the considered notion of security, as long as the respective security definitions for the \sigp and the Fiat-Shamir transformation ``match up''.

	\subsection{\sigps}
	
	We recall the definition of a \sigp. 
	\begin{defi}[\sigp]
		A \sigp $\mathsf{\Sigma} = ({\cal P}, {\cal V})$ for a 
		relation $R \subseteq {\cal X} \times {\cal W}$ is a three-round two-party interactive protocol of the form:
		\begin{empheq}[box=\widefbox]{align*}
		&\underline{\text{Prover } {\cal P}(x,w)}&&&&\underline{\text{Verifier } {\cal V}(x)}\\
		&&&\overset{a}{\longrightarrow}&\\
		&&&\overset{c}{\longleftarrow}&&c\overset{\,\$}{\leftarrow} {\cal C} \\
		&&&\overset{z}{\longrightarrow}&&\textup{Accept iff } V(x,a,c,z) = 1
		\end{empheq}\switch{\vspace{-1.5ex}}{}
	\end{defi}
	Using our terminology and notation from above, $\cal P$ is a two-stage algorithm and we can write
	$$
	(a,z) \leftarrow \langle {\cal P}(x,w) , c\rangle
	$$ 
	for the generation of the first message $a$ in the first stage and the reply $z$ in the second stage once given the challenge $c$. 
	
	\begin{rem}
		We allow the set of {\em instances} $\cal X$, the set of {\em witnesses} $\cal W$ and the relation $R$ to depend on a security parameter $\eta$. Similarly, the interactive algorithms ${\cal P}$ and ${\cal V}$ may depend on $\eta$ (or have $\eta$ as part of their input). However, for ease of notation, we suppress these dependencies on $\eta$ unless they are crucial. 
	\end{rem}
	
	\begin{rem}
		We do not necessarily require a \sigp to be perfectly or statistically correct. This allows us to include protocols that use {\em rejection sampling}, where with a constant probability, the value $z$ would leak too much information on the witness $w$ and so the prover sends $\bot$ instead. 
		On the other hand, by default we consider the soundness/knowledge error to be negligible, i.e., a dishonest prover succeeds only with negligible probability to make the verifier accept if $x$ is not a valid instance or the prover has no witness for it (depending on the considered soundness notion). Negligible soundness/knowledge error can always be achieved by parallel repetition (see e.g. \cite{Damgard2010}).
	\end{rem}

	\subsection{The Fiat-Shamir transformation} \label{sec:FStrans}
	The {\em Fiat-Shamir transformation} turns a \sigp $\mathsf{\Sigma}$ into a non-interactive proof system, denoted $\mathsf{FS[\Sigma]}$, by replacing the verifier's random choice of $c \in \cal C$ with $c := H(x,a)$, where $H : {\cal X}' \to {\cal C}$ is a hash function with a domain ${\cal X}'$ that contains all pairs $x' = (x,a)$ with $x \in \cal X$ and $a$ produced by $\cal P$. 
	In other words, upon input $x$ and $w$, the honest FS-prover produces $\pi = (a,z)$ by running the two-stage \sigp prover $\cal P$ but using $c = H(x,a)$ as challenge (i.e., as input to the second stage).  
	In case $\mathsf{\Sigma}$ is not statistically correct, the above process of producing $\pi = (a,z)$ is repeated sufficiently many times until $V(x,a,H(x,a),z)$ is satisfied (or some bound is reached). 
	In either case, we will write this as
	$$
	\pi = (a,z) \leftarrow P_{FS}^H(x,w) \, . 
	$$
	We may write as $V_{FS}^H(x,\pi)$ the FS-verifier's check whether $V(x,a,H(x,a),z)$ is satisfied or not.
	In the security analysis, the hash function $H$ is modeled by a random oracle, i.e. by oracle access to a uniformly random $H : {\cal X}' \to {\cal C}$. 
	
	When considering an {\em adversary} $\cal A$ that tries to {\em forge} a proof for some instance~$x \in \cal X$, one can distinguish between an {\em arbitrary but fixed} $x$, and an $x$ that is {\em chosen} by $\cal A$ and output along with $a$ in case of \sigps, respectively along with $\pi$ in case of the Fiat-Shamir transformation. If $x$ is fixed then the adversary is called {\em static}, otherwise it is called {\em adaptive}. For the typical security definitions for \sigps this distinction between a static and an adaptive $\cal A$ makes no difference (see Lemmas~\ref{lemma:StatVsAaptSound} and \ref{lemma:StatVsAaptPoK} below), but for the Fiat-Shamir transformation it (potentially) does.

	\subsection{The generic security reduction}\label{SecFSred}
	
	Since an adaptive adversary is clearly not less powerful than a static adversary, we restrict our attention for the moment to the adaptive case. Recall that such an adaptive FS-adversary $\cal A$ outputs the instance $x \in \cal X$ along with the proof $\pi = (a,z)$, and the figure of merit is the probability that $x,a,z$ satisfies $V(x,a,H(x,a),z)$. Thus, we can simply 
	apply Theorem \ref{thmmain}, with $(x,a)$ playing the role of what is referred to as $x$ in the theorem statement, to obtain the existence of an adaptive $\Sigma$-adversary ${\cal S}^{\cal A}$ that produces $(x,a)$ in a first stage, and upon receiving a random challenge $c$ produces $z$, such that for any $x_\circ \in {\cal X}$
	\begin{align*}
	\Pr_{c}\bigr[x\!=\! x_\circ& \wedge V(x,a,c,z) : (x,a,z) \leftarrow \langle{\cal S}^{\cal A} , c\rangle\bigl] 
	\switch{\;}{\\ &}
	\gtrsim \frac{1}{O(q^2)} \Pr_H\bigr[x\!=\! x_\circ \wedge V(x,a,H(x,a),z) : (x,a,z) \leftarrow {\cal A}^H \bigl] \, ,
	\end{align*}
	where the approximate inequality hides a term that is bounded by $\frac{1}{ 2q|{\cal C}|}$ when summed over all $x_\circ \in {\cal X}$. Understanding that $x$ is given to $\cal V$ along with the first message $a$ but also treating it as an output of ${\cal S}^{\cal A}$, while $\cal V$'s output $v$ is its decision to accept or not, we write this as 
	\begin{align*}
	\Pr\bigr[x\!=\! x_\circ \wedge v = accept& :(x,v) \leftarrow \langle{\cal S}^{\cal A} , {\cal V}\rangle\bigl] 
	\switch{\;}{\\ &}
	\gtrsim \frac{1}{O(q^2)} \Pr_H\bigr[x\!=\! x_\circ \wedge V^H_{FS}(x,\pi) : (x,\pi) \leftarrow {\cal A}^H \bigl] \, .
	\end{align*}
	Summed over all $x_\circ \in {\cal X}$, this in particular implies that
	\begin{align*}
	\Pr\bigr[\langle{\cal S}^{\cal A} , {\cal V}\rangle = accept \bigl] \; \geq \frac{1}{O(q^2)} \Pr_H\bigr[V^H_{FS}(x,\pi) : (x,\pi) \leftarrow {\cal A}^H \bigl] - \frac{1}{2q|{\cal C}|} \, .
	\end{align*}
	
	\begin{rem}\label{rem:Generalization}
		We point out that the above arguments extend to a FS-adversary ${\cal A}$ that, besides the instance $x$ and the proof $\pi = (a,z)$, also produces some local (possibly quantum) output satisfying some (quantum) predicate that may depend on $x,a,z$. The resulting $\Sigma$-adversary ${\cal S}^{\cal A}$ is then ensured to produce a local output that satisfies the considered predicate as well, up to the given loss in the probability. Indeed, we can simply include this local output in $z$ and extend the predicate $V$ accordingly. 
	\end{rem}
	
	In a very broad sense, the above means that for {\em any} FS-adversary ${\cal A}$ there exists a $\Sigma$-adversary ${\cal S}^{\cal A}$ that ``{\em achieves the same thing}'' up to a $O(q^2)$ loss in success probability. Hence, for matching corresponding security definitions, security of a \sigp (against a dishonest prover) implies security of its Fiat-Shamir transform. 
	
	We summarize here the above basic transformation from an adaptive FS-adversary ${\cal A}$ to an adaptive $\Sigma$-adversary ${\cal S}^{\cal A}$. 
	\begin{thm}\label{thmFS} 
		There exists a black-box quantum polynomial-time two-stage quantum algorithm $\cal S$ such that for any adaptive Fiat-Shamir adversary $\cal A$, making $q$ queries to a uniformly random function $H$ with appropriate domain and range, and for any $x_\circ \in {\cal X}$: 
		\begin{align*}
		\Pr\bigr[x\!=\! x_\circ \wedge v = accept& :(x,v) \leftarrow \langle{\cal S}^{\cal A} , {\cal V}\rangle\bigl] 
		\switch{\;}{\\ &}
		\gtrsim \frac{1}{O(q^2)} \Pr_H\bigr[x\!=\! x_\circ \wedge V^H_{FS}(x,\pi) : (x,\pi) \leftarrow {\cal A}^H \bigl] \, ,
		\end{align*}
		where the $\gtrsim$ hides a term that is bounded by $\frac{1}{2q|{\cal C}|}$ when summed over all $x_\circ$. 
	\end{thm}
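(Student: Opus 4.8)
The plan is to obtain Theorem~\ref{thmFS} as a direct instantiation of the measure-and-reprogram result, Theorem~\ref{thmmain}, with no further quantum reasoning required: the entire content has in fact already been assembled in the discussion preceding the statement, and the task is mainly to record the correspondence cleanly and track the bookkeeping. First I would fix the dictionary between the two settings. An adaptive Fiat-Shamir adversary ${\cal A}$ against $\mathsf{FS}[\Sigma]$ is precisely an oracle quantum algorithm that makes $q$ queries to a uniformly random $H\colon {\cal X}' \to {\cal C}$ and outputs an instance-message pair together with a (possibly quantum) response. Identifying the theorem's output string ``$x$'' with the pair $(x,a)\in {\cal X}'$, its range ${\cal Y}$ with the challenge set ${\cal C}$, the reprogrammed value $\Theta$ with a challenge $c$, and its predicate $V$ with the \sigp's verification predicate $V(x,a,c,z)$ (given by a measurement when $z$ is quantum), the hypotheses of Theorem~\ref{thmmain} are met verbatim.

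Next I would invoke Theorem~\ref{thmmain} to produce the black-box polynomial-time two-stage algorithm ${\cal S}$. By its construction, ${\cal S}^{\cal A}$ outputs the pair $(x,a)$ in its first stage and, on input a uniformly random challenge $c$, outputs the reply $z$ in its second stage\,---\,exactly the interface of an adaptive $\Sigma$-prover facing the honest verifier ${\cal V}$, so that $\langle {\cal S}^{\cal A}, {\cal V}\rangle$ is well defined and ``$v = accept$'' means $V(x,a,c,z)=1$ for the challenge $c$ that ${\cal V}$ samples. For each fixed pair $(x_\circ,a_\circ)$ the theorem then delivers
\begin{align*}
\Pr_c\bigl[(x,a)\!=\!(x_\circ,a_\circ) \wedge V(x,a,c,z)\bigr] \gtrsim \frac{1}{O(q^2)}\,\Pr_H\bigl[(x,a)\!=\!(x_\circ,a_\circ) \wedge V(x,a,H(x,a),z)\bigr],
\end{align*}
with an additive error that, summed over all such pairs, is at most $\tfrac{1}{2q|{\cal C}|}$.

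Then I would carry out the only genuine bookkeeping step: sum this inequality over all first messages $a_\circ$. Since the events ``$(x,a)=(x_\circ,a_\circ)$'' are mutually exclusive for distinct $a_\circ$, both probabilities collapse to the corresponding events that merely fix $x = x_\circ$, while the summed error stays bounded by $\tfrac{1}{2q|{\cal C}|}$ once we further sum over $x_\circ \in {\cal X}$. Rewriting the left-hand side in interactive notation and recalling that the FS-verifier's check $V^H_{FS}(x,\pi)$ for $\pi=(a,z)$ is by definition $V(x,a,H(x,a),z)$ yields exactly the claimed bound, and the black-box and polynomial-time properties of ${\cal S}$ are inherited directly from Theorem~\ref{thmmain}. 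I do not expect a serious obstacle, as all the quantum difficulty is absorbed into Theorem~\ref{thmmain}; the only points needing care are matching ${\cal S}$'s two-stage input-output interface to the challenge-response interface of a $\Sigma$-prover interacting with ${\cal V}$, so that the interactive probability on the left truly equals the quantity bounded by the theorem, and confirming that passing from a fixed pair $(x_\circ,a_\circ)$ to a fixed instance $x_\circ$ keeps the additive error within the stated $\tfrac{1}{2q|{\cal C}|}$ budget.
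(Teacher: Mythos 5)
Your proposal is correct and follows essentially the same route as the paper: Section~\ref{SecFSred} likewise obtains Theorem~\ref{thmFS} by instantiating Theorem~\ref{thmmain} with the pair $(x,a)$ in the role of the oracle input $x$, the challenge space $\cal C$ as the range, and the \sigp verification predicate as $V$, then passing from fixed $(x_\circ,a_\circ)$ to fixed $x_\circ$ and rewriting in interactive notation. Your explicit summation over $a_\circ$ and the accounting of the additive error term match the paper's (more tersely stated) bookkeeping.
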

	Below, we apply the above general reduction to the respective standard definitions for \textit{soundness} and \textit{proof of knowledge}. Each property comes in the variants \textit{computational} and \textit{statistical}, for guarantees against computationally bounded or unbounded adversaries respectively, and one may consider the static or the adaptive case.

	\subsection{Preservation of soundness}\label{subsec:soundpres}
	
	Let $\mathsf{\Sigma} = ({\cal P},{\cal V})$ be a \sigp for a relation $R$, and let $\mathsf{FS[\Sigma]}$ be its Fiat-Shamir transformation. We set $\mathcal{L} := \{x \in {\cal X}\,|\, \exists\, w \in {\cal W} : R(x,w)\}$. It is understood that ${\cal P}$ and ${\cal V}$, as well as $R$ and thus $\cal L$, may depend on a security parameter $\eta$. 
	We note that in the following definition, we overload notation a bit by writing $\cal A$ for both for the ordinary static and for the adaptive adversary (even though a given $\cal A$ is usually either static or adaptive).

	\begin{defi}\label{def:soundness}
		$\mathsf{\Sigma}$ is \textbf{(computationally/statistically) sound} if there exists a negligible function $\mu(\eta)$ such that for any (quantum polynomial-time/unbounded) adversary $\mathcal{A}$ and any $\eta \in \mathbb{N}$: 
		$$
		\Pr\left[\langle{\cal A} , {\cal V}(x)\rangle = accept \right] \leq \mu(\eta) 
		$$ 
		for all $x\notin \mathcal{L}$; respectively, in case of an {\bf adaptive} $\mathcal{A}$:  
		$$
		\Pr\left[x \not\in {\cal L} \,\wedge\, v = accept :(x,v) \leftarrow \langle{\cal A} , {\cal V}\rangle \right] \leq \mu(\eta) \, .
		$$ 
		$\mathsf{FS[\Sigma]}$ is \textbf{(computationally/statistically) sound} if there exists a negligible function $\mu(\eta)$ and a constant $e$ such that for any (quantum polynomial-time/unbounded) adversary $\mathcal{A}$ and any $\eta \in \mathbb{N}$: 
		$$
		\Pr_H\left[V_{FS}^H(x,\pi) : \pi \leftarrow \mathcal{A}^H\right] \leq q^e \mu(\eta) 
		$$
		for all $x\notin \mathcal{L}$; respectively, in case of an {\bf adaptive} $\mathcal{A}$:  
		$$
		\Pr_H\left[V_{FS}^H(x,\pi) \wedge x\notin \mathcal{L} : (x,\pi)\leftarrow \mathcal{A}^H\right] \leq q^e \mu(\eta) \, .
		$$
	\end{defi}
	
	\begin{rem}
		Note that for the soundness of $\mathsf{FS[\Sigma]}$, the adversary $\cal A$'s success probability may unavoidably grow with the number $q$ of oracle queries, but we require that it grows only polynomially in $q$. 
	\end{rem}
	
	\begin{rem}
		In line with Section~\ref{secRepr}, the description of a quantum algorithm $\cal A$ is understood to include the initial state $\ket{\phi_0}$. As such, when quantifying over all $\cal A$ it is understood that this includes a quantification over all $\ket{\phi_0}$ as well. This stays true when considering $\cal A$ to be quantum polynomial-time, which means that the unitaries $A_i$ can be computed by polynomial-time quantum circuits, and $q$ is polynomial in size, but does not put any restriction on $\ket{\phi_0}$.%
		\footnote{In other words, $\cal A$ is then {\em non-uniform} quantum polynomial-time with {\em quantum} advice. }
		This is in line with \cite[Def.~1]{Unruh2012}, which explicitly spells out this quantification. 
	\end{rem}
	
	We consider the following to be folklore knowledge; for completeness, we still give a proof in Appendix~\ref{app:proofs}.
	\begin{lem}\label{lemma:StatVsAaptSound}
		If $\mathsf{\Sigma}$ is computationally/statistically sound for {\em static} adversaries then it is also computationally/statistically sound for {\em adaptive} adversaries. 
	\end{lem}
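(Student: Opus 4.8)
The plan is to reduce an adaptive attack to a static one by \emph{conditioning on the adversary's first message}, exploiting that static soundness provides a bound that is uniform over all instances $x \notin \mathcal{L}$. Let $\mathcal{A}$ be an adaptive adversary. Following the two-stage description of provers used above, I would view the interaction $\langle \mathcal{A}, \mathcal{V}\rangle$ as: a first stage of $\mathcal{A}$ that outputs the classical pair $(x,a)$ together with a residual (quantum) state $\rho_{x,a}$; the verifier's sampling of a uniform challenge $c \in \mathcal{C}$; and a second stage of $\mathcal{A}$ that, on input $c$ and state $\rho_{x,a}$, produces $z$. For a fixed first message $(x,a)$ let $s(x,a) := \Pr_c[V(x,a,c,z)=1]$ denote the conditional acceptance probability, the expectation being over the challenge $c$ and the second-stage measurement defining $z$. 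By construction the adaptive success probability decomposes as the average
$$
\Pr\left[x \notin \mathcal{L} \wedge v = accept : (x,v)\leftarrow \langle \mathcal{A}, \mathcal{V}\rangle\right] \;=\; \E_{(x,a)}\bigl[\mathbb{1}[x\notin\mathcal{L}]\cdot s(x,a)\bigr],
$$
where $(x,a)$ is distributed according to $\mathcal{A}$'s first stage.

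The crucial step is to bound $s(x,a)$ pointwise for every first message with $x \notin \mathcal{L}$. Fix such an $(x,a)$. I would build a \emph{static} adversary $\mathcal{B}_{x,a}$ against $\mathcal{V}(x)$ that has the classical value $a$ and the residual state $\rho_{x,a}$ hard-wired as (non-uniform, possibly quantum) advice: it sends $a$ as its first message, and upon receiving the verifier's challenge $c$ it runs the second stage of $\mathcal{A}$ on $\rho_{x,a}$ and $c$ to produce $z$. Since the view of $\mathcal{V}(x)$ in this experiment is identical to the view of $\mathcal{V}$ in the adaptive experiment conditioned on the first message being $(x,a)$, the winning probability of $\mathcal{B}_{x,a}$ is exactly $s(x,a)$. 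As $x \notin \mathcal{L}$, static soundness yields $s(x,a) \leq \mu(\eta)$. Note that $\mathcal{B}_{x,a}$ lies in the same complexity class as $\mathcal{A}$: in the computational case it only runs $\mathcal{A}$'s second-stage unitaries and is therefore quantum polynomial-time (with advice), and in the statistical case it is likewise unbounded; the quantification over adversaries allows the non-uniform advice $(a,\rho_{x,a})$, as recorded in the remark on non-uniformity above.

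Plugging the pointwise bound into the decomposition gives
$$
\E_{(x,a)}\bigl[\mathbb{1}[x\notin\mathcal{L}]\cdot s(x,a)\bigr] \;\leq\; \mu(\eta)\cdot \Pr[x\notin\mathcal{L}] \;\leq\; \mu(\eta),
$$
so $\mathsf{\Sigma}$ is adaptively sound with the \emph{same} negligible function $\mu$. The main obstacle, and the reason the statement deserves a proof at all, is to resist the tempting but wasteful reduction that fixes a target instance $x_\circ$ and bounds the joint probability $\Pr[x = x_\circ \wedge accept]$: summing this over all $x_\circ \notin \mathcal{L}$ loses a factor $|\mathcal{X}|$, which is fatal when the instance set is super-polynomial. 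Bounding the \emph{conditional} success $s(x,a)$ instead turns the sum into a convex combination, and uniformity of the static bound over all instances is exactly what makes the average inherit that bound. The only genuinely quantum point to check is that the residual post-measurement state $\rho_{x,a}$ may legitimately be used as advice for $\mathcal{B}_{x,a}$, which is unproblematic in the non-uniform-with-quantum-advice model adopted here.
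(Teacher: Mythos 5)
Your proposal is correct and follows essentially the same route as the paper's proof: decompose the adaptive success probability as a convex combination over the first-stage output $(x,a)$, hard-wire $a$ together with the residual state as (quantum) non-uniform advice for a static adversary whose acceptance probability equals the conditional success $s(x,a)$, and invoke the uniform static bound $\mu(\eta)$ pointwise. The paper's version writes the decomposition as a sum over $x_\circ\notin\mathcal{L}$ and $a$ weighted by $\Pr[\mathcal{A}_{init}=(x_\circ,a)]$, which is the same averaging argument you give.
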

	The following is now an immediate application of Theorem~\ref{thmFS} and the above observation regarding static and adaptive security for \sigps. 
	
	\begin{cor}\label{corProof}
		Let $\mathsf{\Sigma}$ be a \sigp with superpolynomially sized challenge space $\mathcal{C}$. If $\mathsf{\Sigma}$ is computationally/statistically sound against a static adversary then  $\mathsf{FS[\Sigma]}$ is computationally/statistically sound against an adaptive adversary. 
	\end{cor}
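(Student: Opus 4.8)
The plan is to chain the generic reduction of Theorem~\ref{thmFS} with the soundness of $\mathsf{\Sigma}$ itself, after first upgrading the assumed \emph{static} soundness to \emph{adaptive} soundness. First I would invoke Lemma~\ref{lemma:StatVsAaptSound}: since $\mathsf{\Sigma}$ is (computationally/statistically) sound against static adversaries, it is also sound against adaptive ones. Hence there is a negligible function $\mu$ such that $\Pr[x\notin\mathcal{L}\wedge v=accept:(x,v)\leftarrow\langle\mathcal{B},\mathcal{V}\rangle]\leq\mu(\eta)$ for \emph{every} admissible adaptive $\Sigma$-adversary $\mathcal{B}$, where ``admissible'' means quantum polynomial-time in the computational case and computationally unbounded with polynomially many queries in the statistical case.

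Next I would fix an arbitrary adaptive $\mathsf{FS[\Sigma]}$-adversary $\mathcal{A}$ making $q$ queries and feed it to the black-box algorithm $\mathcal{S}$ of Theorem~\ref{thmFS}. Summing the per-$x_\circ$ bound there over all $x_\circ\notin\mathcal{L}$, and using that the hidden error terms sum to at most $1/(2q|\mathcal{C}|)$, gives
\begin{align*}
\Pr\bigl[x\notin\mathcal{L}\wedge v=accept:(x,v)\leftarrow\langle\mathcal{S}^{\mathcal{A}},\mathcal{V}\rangle\bigr]
&\geq \frac{1}{O(q^2)}\Pr_H\bigl[x\notin\mathcal{L}\wedge V_{FS}^H(x,\pi):(x,\pi)\leftarrow\mathcal{A}^H\bigr]-\frac{1}{2q|\mathcal{C}|}.
\end{align*}
The point to verify carefully here is that $\mathcal{S}^{\mathcal{A}}$ is genuinely an admissible adaptive $\Sigma$-adversary of the type covered by the first step: it outputs $(x,a)$ in its first stage, replies to a random challenge $c$ with $z$ in its second stage, and by the black-box polynomial-time guarantee of $\mathcal{S}$ (Remark~\ref{rem:BB}) it remains quantum polynomial-time when $\mathcal{A}$ is, and stays query-bounded (though unbounded) when $\mathcal{A}$ is the statistical-case adversary.

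Finally I would bound the left-hand side of the display above by $\mu(\eta)$ using adaptive soundness of $\mathsf{\Sigma}$ applied to $\mathcal{S}^{\mathcal{A}}$, and rearrange to obtain $\Pr_H[x\notin\mathcal{L}\wedge V_{FS}^H(x,\pi)]\leq O(q^2)\,\mu(\eta)+O(q)/|\mathcal{C}|$. Since $\mathcal{C}$ is superpolynomially sized and $q$ is polynomial, $O(q)/|\mathcal{C}|$ is negligible; as $q\geq 1$ I can absorb it into the first term, so that with $\mu'(\eta):=C\mu(\eta)+C'/|\mathcal{C}|$ (a sum of negligible functions, hence negligible) and $e:=2$ I get $\Pr_H[x\notin\mathcal{L}\wedge V_{FS}^H(x,\pi)]\leq q^e\mu'(\eta)$, which is precisely adaptive (computational/statistical) soundness of $\mathsf{FS[\Sigma]}$ in the sense of Definition~\ref{def:soundness}. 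I expect the probability manipulation to be routine; the only genuinely delicate point is the bookkeeping of adversary classes, i.e.\ confirming that $\mathcal{S}^{\mathcal{A}}$ falls into exactly the class (polynomial-time, respectively unbounded-but-query-bounded) over which the soundness assumption quantifies, so that the step of applying soundness to $\mathcal{S}^{\mathcal{A}}$ is legitimate in both the computational and the statistical variant simultaneously.
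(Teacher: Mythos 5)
Your proposal is correct and follows essentially the same route as the paper's own proof: both apply Theorem~\ref{thmFS}, sum the per-$x_\circ$ bound over $x_\circ\notin\mathcal{L}$, invoke Lemma~\ref{lemma:StatVsAaptSound} to pass from static to adaptive soundness of $\mathsf{\Sigma}$, and absorb the $O(q)/|\mathcal{C}|$ term using the superpolynomial size of $\mathcal{C}$. Your explicit check that $\mathcal{S}^{\mathcal{A}}$ lands in the right adversary class is a point the paper treats only in passing, but the argument is the same.
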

	\begin{proof}
		Applying Theorem \ref{thmFS}, we find that for any adaptive FS-adversary $\cal A$, polynomially bounded in the computational setting, there exists an adaptive \sigp adversary $\mathcal{S}^\mathcal{A}$, polynomially bounded if $\cal A$ is, so that 
		\begin{align*}
		\Pr\bigl[&x\notin \mathcal{L}  \,\wedge\, V_{FS}^H(x,\pi):  (x,\pi)\leftarrow \mathcal{A}^H\bigr] \\[1ex]
		&= \sum_{x_\circ\notin \mathcal{L}}\Pr\bigl[x\!=\!x_\circ  \,\wedge\, V_{FS}^H(x,\pi):  (x,\pi)\leftarrow \mathcal{A}^H\bigr] \\
		&\leq O(q^2)\cdot\bigg(\bigg(\sum_{x_\circ\notin \mathcal{L}}\Pr\bigr[x\!=\! x_\circ \wedge v = accept :(x,v) \leftarrow \langle{\cal S}^{\cal A} , {\cal V}\rangle\bigl]\bigg) + \frac{1}{2q|\mathcal{C}|}\bigg) \\
		&= O(q^2)\cdot\bigg(\Pr\bigr[x\not\in {\cal L} \wedge v = accept :(x,v) \leftarrow \langle{\cal S}^{\cal A} , {\cal V}\rangle\bigl] \bigg) + \frac{O(q)}{|\mathcal{C}|}\\
		&\leq O(q^2)\cdot\mu(\eta) + \frac{O(q)}{|\mathcal{C}|}\hspace{20pt} 
		\end{align*}
		where the last inequality holds for some negligible function $\mu(\eta)$ if $\mathsf{\Sigma}$ is sound against an adaptive adversary. The latter is ensured by the assumed soundness against a static adversary and Lemma~\ref{lemma:StatVsAaptSound}. This bound can obviously be written as $q^2 \mu'(\eta)$ for another negligible function $\mu'(\eta)$, showing the claimed soundness of $\mathsf{FS[\Sigma]}$. \qed
	\end{proof}

	\subsection{Preservation as a proof of knowledge}\label{subsec:extractpres}
	
	We now recall the definition of a proof of knowledge, sometimes also referred to as (witness) extractability, tailored to the case of a negligible ``knowledge error''. 
	Informally, the requirement is that if $\cal A$ succeeds in proving an instance $x$, then by using $\cal A$ as a black-box only it is possible to extract a witness for $x$. In case of an arbitrary but fixed $x$, this property is formalized in a rather straightforward way; however, in case of an adaptive $\cal A$, the formalization is somewhat subtle, because one can then not refer to {\em the} $x$ for which $\cal A$ manages to produce a proof. We adopt the approach (though not the precise formalization) from~\cite{Unruh2017}, which requires $x$ to satisfy an arbitrary but fixed predicate. 
	
	\begin{defi}\label{def:PoK}
		$\mathsf{\Sigma}$ is a \textbf{(computational/statistical) proof of knowledge} if there exists a quantum polynomial-time 
		black-box `knowledge extractor' $\mathcal{K}$, a polynomial $p(\eta)$, a constant $d \geq 0$, and a negligible
		function $\kappa(\eta)$ such that for any (quantum polynomial-time/unbounded) adversary $\mathcal{A}$, any $\eta \in \mathbb{N}$ and any $x \in \cal X$: 
		\begin{align*}
		&\Pr\left[(x,w)\in R : w\leftarrow \mathcal{K}^{\mathcal{A}}(x)\right] \geq\frac{1}{p(\eta)}\cdot\Pr\left[\langle{\cal A}, {\cal V}(x)\rangle = accept \right]^d- \kappa(\eta) \, ;
		\end{align*}
		respectively, in case of an {\bf adaptive} $\cal A$:  
		\begin{align*}
		\Pr\bigl[x\in X &\,\wedge\, (x,w)\in R: (x,w)\leftarrow \mathcal{K}^{\mathcal{A}}\bigr] 
		\switch{\;}{\\[0.4em] &}
		\geq \frac{1}{p(\eta)} \cdot \Pr\left[ x\in X \wedge v = accept: (x,v)\leftarrow \langle{\cal A}, {\cal V}\rangle\right]^d - \kappa(\eta)
		\end{align*}
		for any subset $X\subseteq {\cal X}$. 
		
		$\mathsf{FS[\Sigma]}$ is a \textbf{(computational/statistical) proof of knowledge} if there exists a polynomial-time black-box `knowledge extractor' $\mathcal{E}$, a polynomial $p(\eta)$, constants $d,e \geq 0$, and a negligible function $\mu(\eta)$, such that for any (quantum polynomial-time/unbounded) algorithm $\mathcal{A}$, any $\eta \in \mathbb{N}$ and any $x \in \cal X$: 
		\begin{align*}
		\Pr\bigl[(x,w)\in R: w\leftarrow \mathcal{E}^{\mathcal{A}}(x)\bigr] 
		&\geq \frac{1}{q^e p(\eta)}\cdot\Pr_H\left[V_{FS}^H(x,\pi): \pi \leftarrow \mathcal{A}^H\right]^d - \mu(\eta). \; ;
		\end{align*}
		respectively, in case of an {\bf adaptive} $\cal A$: 
		\begin{align*}
		\Pr\bigl[x\in X &\,\wedge\, (x,w)\in R: (x,w)\leftarrow \mathcal{E}^{\mathcal{A}}\bigr] 
		\switch{\;}{\\[0.4em] &}
		\geq \frac{1}{q^e p(\eta)}\cdot\Pr_H\left[ x\in X \wedge V_{FS}^H(x,\pi): (x,\pi)\leftarrow \mathcal{A}^H\right]^d - \mu(\eta) 
		\end{align*}
		for any subset $X\subseteq {\cal X}$, where $q$ is the number of queries $\cal A$ makes. 
	\end{defi}
	Also here, for \sigps static security implies adaptive security. 
	
	\begin{lem}\label{lemma:StatVsAaptPoK}
		If $\mathsf{\Sigma}$ is a computational/statistical proof of knowledge for {\em static} $\cal A$ then it is also a computational/statistical proof of knowledge for {\em adaptive} $\cal A$. 
	\end{lem}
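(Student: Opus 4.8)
The plan is to build the adaptive extractor $\hat{\mathcal K}$ out of the static extractor $\mathcal K$ supplied by Definition~\ref{def:PoK}, by first pinning down the instance that the adaptive adversary commits to and only then invoking the static extractor on the residual adversary. Concretely, given an adaptive $\cal A$, I would let $\hat{\mathcal K}^{\cal A}$ run the first stage of $\cal A$, measure the instance register to obtain some $x_\circ$ (which in an honest interaction is anyway sent to $\cal V$ together with the first message $a$, hence effectively measured), and then run $\mathcal K^{{\cal A}_{x_\circ}}(x_\circ)$, where ${\cal A}_{x_\circ}$ is the static $\Sigma$-adversary whose initial state is the post-measurement state of $\cal A$ and whose response unitary is the second stage of $\cal A$; finally $\hat{\mathcal K}$ outputs $(x_\circ,w)$ for the witness $w$ returned by $\mathcal K$. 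That this composition is a legitimate black-box extractor, polynomial-time or unbounded exactly when $\cal A$ is, follows from the composition property in Remark~\ref{rem:BB}: the first stage and the measurement of $x$ are performed only once, so $\mathcal K$ only ever needs to rewind the second stage, for which black-box access to that unitary and its inverse suffices.

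For the analysis I would write $r_{x_\circ} := \Pr[x \!=\! x_\circ]$ for the probability that the measurement yields $x_\circ$ and $p_{x_\circ} := \Pr[x \!=\! x_\circ \wedge v = accept : (x,v) \leftarrow \langle {\cal A},{\cal V}\rangle]$. Since the challenge is independent of $(x,a)$, the static adversary ${\cal A}_{x_\circ}$ makes ${\cal V}(x_\circ)$ accept with conditional probability exactly $p_{x_\circ}/r_{x_\circ}$, so the static guarantee gives $\Pr[(x_\circ,w)\in R : w\leftarrow \mathcal K^{{\cal A}_{x_\circ}}(x_\circ)] \ge \frac{1}{p(\eta)}(p_{x_\circ}/r_{x_\circ})^d - \kappa(\eta)$. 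Averaging over the measurement outcome and restricting to $x_\circ \in X$, the success probability of $\hat{\mathcal K}$ is at least $\sum_{x_\circ\in X} r_{x_\circ}\big(\frac{1}{p(\eta)}(p_{x_\circ}/r_{x_\circ})^d - \kappa(\eta)\big) \ge \frac{1}{p(\eta)}\sum_{x_\circ\in X} r_{x_\circ}^{1-d} p_{x_\circ}^d - \kappa(\eta)$, using $\sum_{x_\circ} r_{x_\circ} \le 1$.

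It then remains to lower-bound $S := \sum_{x_\circ\in X} r_{x_\circ}^{1-d} p_{x_\circ}^{d}$ by the adaptive success probability $P := \sum_{x_\circ\in X} p_{x_\circ}$, and this is the step I expect to be the main obstacle: the naive bound fails for $d>1$, since $\cal A$ could in principle spread its success over exponentially many instances, so $\sum p_{x_\circ}^d$ alone cannot be controlled by $P$. I would resolve it with Hölder's inequality. For $d \ge 1$, writing $p_{x_\circ} = r_{x_\circ}^{(d-1)/d}\cdot\big(r_{x_\circ}^{1-d}p_{x_\circ}^d\big)^{1/d}$ and applying Hölder with exponents $d/(d-1)$ and $d$ gives $P \le \big(\sum_{x_\circ\in X} r_{x_\circ}\big)^{(d-1)/d} S^{1/d} \le S^{1/d}$, i.e. $S \ge P^{d}$. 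For $d \le 1$ I would instead use $r_{x_\circ}\ge p_{x_\circ}$ together with $1-d\ge 0$ to get $r_{x_\circ}^{1-d}p_{x_\circ}^d \ge p_{x_\circ}$ and hence $S \ge P$. Setting $d' := \max(1,d)$ and using $P \le 1$, both cases yield $S \ge P^{d'}$. Substituting this back shows that $\hat{\mathcal K}$ is an adaptive knowledge extractor with the same polynomial $p(\eta)$ and negligible $\kappa(\eta)$ and with knowledge exponent $d'$, for every $X\subseteq{\cal X}$, which is exactly the adaptive proof-of-knowledge property; the computational and statistical variants are identical, since $\hat{\mathcal K}$ inherits the running-time regime of $\mathcal K$ and of the first stage of $\cal A$.
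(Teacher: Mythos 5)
Your proposal is correct and follows essentially the same route as the paper: run the first stage of the adaptive adversary, fix the committed instance, hand the residual static adversary to the static extractor, and pull the sum over instances inside the $d$-th power via a convexity argument (your weighted H\"older step with exponents $d/(d-1)$ and $d$ is exactly the Jensen inequality the paper applies, and the paper likewise disposes of $d\le 1$ by assuming $d\ge 1$ without loss of generality). The only cosmetic difference is that the paper conditions the residual static adversary on the pair $(x,a)$ rather than on $x$ alone, which changes nothing in the analysis.
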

	Again, the following is now an immediate application of Theorem~\ref{thmFS} and the above observation regarding static and adaptive security for \sigps. 
	
	\begin{cor}\label{corPoK}
		Let $\mathsf{\Sigma}$ be a \sigp with superpolynomially sized $\mathcal{C}$. If $\mathsf{\Sigma}$ is a computational/statistical proof of knowledge for static adversaries then $\mathsf{FS[\Sigma]}$ is a computational/statistical proof of knowledge for adaptive adversaries. 
	\end{cor}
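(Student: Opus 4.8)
The plan is to mirror the proof of Corollary~\ref{corProof}, replacing the use of plain soundness by the proof-of-knowledge machinery. First I would fix an arbitrary adaptive Fiat--Shamir adversary $\mathcal{A}$ making $q$ queries and apply Theorem~\ref{thmFS} to obtain an adaptive \sigp adversary $\mathcal{S}^{\mathcal{A}}$, which is polynomial-time whenever $\mathcal{A}$ is. Since $\mathsf{\Sigma}$ is a proof of knowledge against static adversaries, Lemma~\ref{lemma:StatVsAaptPoK} upgrades this to a proof of knowledge against adaptive adversaries, so there exist a black-box knowledge extractor $\mathcal{K}$, a constant $d$, a polynomial $p$ and a negligible $\kappa$ as in Definition~\ref{def:PoK}. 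The Fiat--Shamir extractor is then defined by composition, $\mathcal{E}^{\mathcal{A}} := (\mathcal{K}^{\mathcal{S}})^{\mathcal{A}}$, which is legitimate, black-box and again polynomial-time by the composition property recorded in Remark~\ref{rem:BB}, since there $(\mathcal{K}^{\mathcal{S}})^{\mathcal{A}} = \mathcal{K}^{(\mathcal{S}^{\mathcal{A}})}$.

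Second, I would chain the two probability guarantees. Fixing a subset $X \subseteq \mathcal{X}$ and applying the adaptive \sigp proof-of-knowledge bound to the adversary $\mathcal{S}^{\mathcal{A}}$ gives
\[
\Pr\bigl[x\in X \wedge (x,w)\in R : (x,w)\leftarrow \mathcal{K}^{\mathcal{S}^{\mathcal{A}}}\bigr] \geq \frac{1}{p(\eta)}\Pr\bigl[x\in X \wedge v = accept : (x,v)\leftarrow \langle \mathcal{S}^{\mathcal{A}},\mathcal{V}\rangle\bigr]^{d} - \kappa(\eta).
\]
Summing the bound of Theorem~\ref{thmFS} over all $x_\circ \in X$ then lower-bounds the acceptance probability of $\mathcal{S}^{\mathcal{A}}$ by $\frac{1}{O(q^2)}\Pr_H[x\in X \wedge V_{FS}^H(x,\pi)] - \frac{1}{2q|\mathcal{C}|}$, exactly as in the displayed chain of Corollary~\ref{corProof}. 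Substituting this into the previous inequality produces a bound of the shape required by Definition~\ref{def:PoK} for $\mathsf{FS[\Sigma]}$, with the polynomial loss $O(q^2)$ turning into $O(q^{2d})$ after being raised to the $d$-th power and hence being absorbed into the factor $q^{e}$ (with $e$ increased by $2d$ and $p$ multiplied by the constant $2^{d}$).

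The main obstacle is the additive negligible term $\frac{1}{2q|\mathcal{C}|}$ that now sits inside the $d$-th power, since $(t-s)^d$ is not directly comparable to $t^d$ minus something negligible. I would resolve this by a case distinction on the size of $P := \Pr_H[x\in X \wedge V_{FS}^H(x,\pi)]$. When $P$ is large relative to the additive term, say $\frac{P}{O(q^2)} \geq \frac{1}{q|\mathcal{C}|}$, the subtracted term is at most half the first, so the base $t-s$ satisfies $(t-s)^d \geq (t/2)^d = t^d/2^d$ and the desired bound follows directly. When $P$ is small, the superpolynomial size of $\mathcal{C}$ forces $P \leq \frac{O(q)}{|\mathcal{C}|}$ to be negligible; then $\frac{1}{q^{e}p(\eta)}P^{d}$ is itself negligible and can be folded into a single new negligible function $\mu(\eta)$ dominating $\kappa$, making the right-hand side of the target inequality nonpositive and hence trivially satisfied. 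Finally I would check that the computational/statistical dichotomy is preserved: $\mathcal{S}$ and $\mathcal{K}$ are polynomial-time and composition keeps $\mathcal{E}$ polynomial-time in the computational case, while in the statistical case the same argument goes through verbatim for computationally unbounded $\mathcal{A}$ with polynomially bounded $q$.
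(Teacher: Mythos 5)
Your proposal is correct and follows essentially the same route as the paper: compose the \sigp extractor $\mathcal{K}$ (upgraded to the adaptive setting via Lemma~\ref{lemma:StatVsAaptPoK}) with the reduction $\mathcal{S}^{\mathcal{A}}$ from Theorem~\ref{thmFS}, then chain the two probability bounds. Your explicit case distinction for absorbing the additive $\frac{1}{2q|\mathcal{C}|}$ term inside the $d$-th power is a detail the paper leaves implicit, but it is a correct way to justify the final negligible function $\mu(\eta)$.
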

	
	\begin{proof}
		First, we observe that by Lemma~\ref{lemma:StatVsAaptPoK}, we may assume $\mathsf{\Sigma}$ to be a computational/statistical proof of knowledge for {\em adaptive} adversaries. Let $\cal K$ be the black-box knowledge extractor. Let $\cal A$ be an (quantum polynomial-time/unbounded) adaptive FS-adversary $\cal A$. We define a black-box knowledge extractor $\cal E$ for $\mathsf{FS[\Sigma]}$ as follows. ${\cal E}^{\cal A}$ simply works by running \smash{${\cal K}^{{\cal S}^{\cal A}}$}, where ${\cal S}^{\cal A}$ the adaptive \sigp adversary obtained by invoking Theorem~\ref{thmFS}. 
		For any subset $X \subseteq \cal X$, invoking the proof-of-knowledge property of $\mathsf{\Sigma}$ and using Theorem~\ref{thmFS}, we see that
		\allowdisplaybreaks
		\begin{align*}
		\Pr&\bigl[x\in X\wedge (x,w)\in R : (x,w)\leftarrow \mathcal{E}^{\mathcal{A}}\bigr] \\[1ex]
		&= \Pr\bigl[x\in X\wedge (x,w)\in R : (x,w)\leftarrow {\cal K}^{{\cal S}^{\cal A}} \bigr] \\[1.2ex]
		&= \frac{1}{p(\eta)}\cdot\Pr\bigl[x \in X \wedge v=accept  (x,v) \leftarrow \langle{\cal S}^{\cal A}, {\cal V}\rangle \bigr]^d- \kappa(\eta) \\
		&= \frac{1}{p(\eta)}\cdot\bigg(\sum_{x_\circ \in X} \Pr\bigl[x = x_\circ \wedge v=accept  (x,v) \leftarrow \langle{\cal S}^{\cal A}, {\cal V}\rangle \bigr]\bigg)^d- \kappa(\eta) \\
		&\geq \frac{1}{p(\eta)}\bigg(\frac{1}{O(q^2)} \sum_{x_\circ \in X}\Pr_H\bigl[ x = x_\circ \wedge V_{FS}^H(x,\pi): (x,\pi)\leftarrow \mathcal{A}^H \bigr] - \frac{1}{2q|{\cal C}|}\bigg)^d -\kappa(\eta)\\
		&\geq \frac{1}{p(\eta)\cdot O(q^{2d})}\cdot\Pr_H\left[ x\in X \wedge V_{FS}^H(x,\pi): (x,\pi)\leftarrow \mathcal{A}^H\right]^d - \mu(\eta)
		\end{align*}
		for some negligible function $\mu(\eta)$. \qed
	\end{proof}

	\begin{rem}\label{remUnruh} 
		We point out that in \cite{Unruh2017} Unruh considers a stronger notion of extractability than our Definition \ref{def:PoK}, where it is required that, in some sense, the extractor also recovers any local (possibly quantum) output of the adversary $\cal A$. In the light of Remark~\ref{rem:Generalization}, we expect that our result also applies to this stronger notion of extractability. 
	\end{rem}

	\section{Application to Fiat-Shamir signatures}\label{sec:sig}
	
	Any Fiat-Shamir non-interactive proof system can easily be transformed into a public-key signature scheme.\footnote{In fact, that is how the Fiat-Shamir transform was originally conceived in \cite{Fiat1987}. Only later \cite{Bellare1993} adapted the idea to construct a non-interactive zero-knowledge proof system.} The signer simply proves knowledge of a witness (the secret key) for a composite statement $x^* := x \| m$, which includes the public key $x$ as well as the message $m$. The signature $\sigma$ then consists of a proof for $x^*$. 
	
	\begin{defi} \label{def:hardrelation}
		A binary relation $R$ with instance generator $G$ is said to be \emph{hard} if for any quantum polynomial-time algorithm $\mathcal{A}$ we have
		$$
		\Pr\left[ (x,w')\in R : (x,w)\leftarrow G, w' \leftarrow \mathcal{A}(x) \right]\leq \mu(\eta)
		$$
		for some negligible function $\mu(\eta)$, where $G$ is such that it always outputs a pair $(x,w)\in R$.
	\end{defi}
	
	\begin{defi}\label{defSig}
		A \emph{Fiat-Shamir signature scheme} based on a \sigp $\mathsf{\Sigma} = ({\cal P,V})$ for a hard relation $R$ with instance generator $G$, denoted by $\mathsf{Sig[\Sigma]}$ is defined by the triple {\it(Gen, Sign, Verify)}, with
		\begin{itemize}
			\item $\Gen$: Pick $(x,w) \leftarrow G$, set $sk:= (x,w)$ and $pk:= x$.
			\item $\Sign^H(sk,m)$: Return $(m,\sigma)$ where $\sigma\leftarrow P_{FS}^H(x\|m,w)$.
			\item $\Ver^H(pk,m,\sigma)$: Return $V_{FS}^H(x\|m,\sigma)$.
		\end{itemize}
		Here $(P_{FS}^H,V_{FS}^H) = \mathsf{FS[\Sigma^*]}$, where $\mathsf{\Sigma^*} = ({\cal P^*,V^*})$ is the \sigp obtained from $\mathsf{\Sigma}$ by setting $\mathcal{P}^*(x\|m) = \mathcal{P}(x)$ and $\mathcal{V}^*(x\|m) = \mathcal{V}(x)$ for any $m$. 
	\end{defi}    
	Note that by definition of $\mathsf{FS}$ in Section~\ref{sec:FStrans}, we use $V_{FS}^H(x\|m,\sigma)$ as shortcut for $V(x\|m,a,H(x\|m,a),z)$.
	
	We investigate the following standard security notions for signature schemes.
	\begin{defi}[$\mathsf{sEUF}\!-\!\mathsf{CMA/EUF}\!-\!\mathsf{NMA}$]\label{defsEUF}
		A signature scheme fulfills \emph{strong existential unforgeability under chosen-message attack {\upshape(}$\sEUFCMA${\upshape)}} if for all quantum polynomial-time algorithms $\mathcal{A}$ and for uniformly random $H:\mathcal{X'}\to \mathcal{C}$ it holds that
		$$\Pr\Bigl[\Ver^H(pk,m,\sigma) \wedge (m,\sigma)\notin \mathbf{Sig}\mathsf{-q} : (pk,sk)\leftarrow\Gen, (m,\sigma)\leftarrow \mathcal{A}^{H,\mathbf{Sig}}(pk)\Bigr]$$
		is negligible. Here $\mathbf{Sig}$ is classical oracle which upon classical input $m$ returns $\Sign^H(m,sk)$, and $\mathbf{Sig}\mathsf{-q}$ is the list of all queries made to $\mathbf{Sig}$.
		
		Analogously, a signature scheme fulfills \emph{existential unforgeability under no-message attack {\upshape(}$\EUFNMA${\upshape)}} if for all quantum polynomial-time algorithms $\mathcal{A}$ and for uniformly random $H:\mathcal{X'}\to \mathcal{C}$ it holds that
		$$\Pr\Bigl[\Ver^H(pk,m,\sigma) : (pk,sk)\leftarrow\Gen, (m,\sigma)\leftarrow \mathcal{A}^{H}(pk)\Bigr]$$
		is negligible.
	\end{defi}
	
	The unforgeability (against no-message attacks) of a Fiat-Shamir signature scheme is shown below to follow from the proof-of-knowledge property of the underlying proof system (hence, as we now know, of the underlying \sigp), under the assumption that the relation is hard, i.e.\ it is infeasible to compute $sk$ from $pk$.
	
	\begin{thm} \label{thm:NMA}
		Let $\mathsf{\Sigma}$ be \sigp for some hard relation $R$, with superpolynomially sized challenge space $\mathcal{C}$ and the proof-of-knowledge property according to Definition~\ref{def:PoK}. Then, the Fiat-Shamir signature scheme $\mathsf{Sig[\Sigma]}$ fulfills $\mathsf{EUF\!-\!NMA}$ security.
	\end{thm}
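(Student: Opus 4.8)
The plan is to reduce an $\EUFNMA$ forgery to breaking the hardness of $R$, by transporting the proof-of-knowledge property through the Fiat-Shamir transformation and exploiting that a signature forgery is nothing but an (adaptive) proof for the composite statement $x\|m$. First I would observe that the auxiliary \sigp $\mathsf{\Sigma^*}$ from Definition~\ref{defSig}, which merely appends the message $m$ to the instance and otherwise behaves exactly like $\mathsf{\Sigma}$, is itself a proof of knowledge for the relation $R^* := \{(x\|m,w) : (x,w)\in R\}$. Indeed, since $\mathcal{P}^*$ and $\mathcal{V}^*$ ignore $m$, any static knowledge adversary against $\mathsf{\Sigma^*}$ on a statement $x\|m$ is literally a knowledge adversary against $\mathsf{\Sigma}$ on $x$, so the extractor and witness relation carry over verbatim. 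As $\mathsf{\Sigma^*}$ inherits the same superpolynomial challenge space $\mathcal{C}$, Corollary~\ref{corPoK} then gives that $\mathsf{FS[\Sigma^*]}$ is a proof of knowledge against \emph{adaptive} adversaries; let $\mathcal{E}$ be its black-box extractor, with parameters $p(\eta),d,e$ and negligible $\mu(\eta)$.

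Next I would reinterpret any $\EUFNMA$ forger $\mathcal{A}$, with advantage $\varepsilon$, as an adaptive FS-adversary against $\mathsf{FS[\Sigma^*]}$. On input $pk = x$ and oracle $H$, the forger outputs $(m,\sigma)$, and by definition $\Ver^H(pk,m,\sigma) = V_{FS}^H(x\|m,\sigma)$; so the associated FS-adversary simply outputs the statement $x^*:=x\|m$ together with the proof $\pi:=\sigma$, and its forgery probability equals $\varepsilon$. The key point is that the statement it produces \emph{always} has prefix $pk$. Applying the adaptive extractability guarantee of Definition~\ref{def:PoK} with $X$ chosen to be the set of all statements whose prefix equals $pk$—for which the event $x^*\in X$ holds with certainty—yields
\begin{align*}
\Pr\bigl[x^*\in X \wedge (x^*,w)\in R^* : (x^*,w)\leftarrow \mathcal{E}^{\mathcal{A}}\bigr] \;\geq\; \frac{1}{q^e p(\eta)}\,\varepsilon^d - \mu(\eta) \, .
\end{align*}
Since $x^*\in X$ forces $x^* = x\|m$, and $(x^*,w)\in R^*$ then means $(x,w)\in R$, the extractor in fact recovers a witness for the public key $x$ with this probability.

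Finally I would package this into an attacker $\mathcal{B}$ on the hardness of $R$ (Definition~\ref{def:hardrelation}): on input $x$ drawn by $G$, $\mathcal{B}$ sets $pk:=x$, runs the black-box quantum polynomial-time extractor $\mathcal{E}^{\mathcal{A}}$, and outputs the extracted $w$. By the displayed bound, $\mathcal{B}$ returns a valid witness for $x$ with probability at least $\frac{1}{q^e p(\eta)}\varepsilon^d - \mu(\eta)$, which is non-negligible whenever $\varepsilon$ is non-negligible, as $d,e$ are constants and $p,q$ are polynomial. Hardness of $R$ thus forces $\varepsilon$ to be negligible, establishing $\EUFNMA$ security. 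The part that needs the most care—and which I regard as the main obstacle—is the bookkeeping that ties the \emph{adaptively} chosen statement $x\|m$ back to the \emph{fixed} hard instance $x$; this is exactly what the subset-$X$ formulation of adaptive extractability is designed to handle. One must additionally invoke the composition property of black-box algorithms (Remark~\ref{rem:BB}) to ensure that $\mathcal{E}^{\mathcal{A}}$, which internally runs $\mathcal{K}^{\mathcal{S}^{\mathcal{A}}}$, is genuinely a single quantum polynomial-time algorithm with black-box access to $\mathcal{A}$, so that $\mathcal{B}$ is an admissible adversary against the hard relation.
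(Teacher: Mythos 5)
Your proposal is correct and follows essentially the same route as the paper: view the $\EUFNMA$ forger as an adaptive FS-adversary for $\mathsf{\Sigma^*}$, apply Corollary~\ref{corPoK} with $X$ the set of statements prefixed by $pk$, and turn the resulting extractor into an attacker on the hard relation $R$. The one step you gloss over is that the $\EUFNMA$ advantage $\varepsilon$ is an \emph{average} over $(x,w)\leftarrow G$ while your displayed extractability bound holds per fixed instance $x$ (with the per-instance forgery probability $\varepsilon_x$ in place of $\varepsilon$); the paper closes this by applying Jensen's inequality to the concave map $(\cdot)^{1/d}$ when averaging over the instance, which is exactly what is needed for your final reduction to be literally valid.
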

	\begin{proof}
		Let $\mathcal{A}$ be an adversary against $\mathsf{EUF\!-\!NMA}$, issuing at most $q$ quantum queries to $H$. We show that 
		\[
		\mathrm{Adv}^{\EUFNMA}_{\mathsf{Sig[\Sigma]}}(\mathcal{A}) := \Pr\left[\Ver^H(pk,m,\sigma) : (pk,sk)\leftarrow\Gen, (m,\sigma)\leftarrow \mathcal{A}^{H}(pk)\right]
		\]
		is negligible.
		
		Recall from Definition~\ref{defSig} of Fiat-Shamir signatures that the \sigp $\mathsf{\Sigma^*}$ is the \sigp $\mathsf{\Sigma}$ where the prover and verifier ignore the message part $m$ of the instance $x \| m$. A successful forgery $(m,\sigma)$ is such that $V_{FS}^H(x\|m,\sigma)$ accepts the proof $\sigma$. Therefore, 
		\begin{align} \label{eq:adv}
		\mathrm{Adv}^{\EUFNMA}_{\mathsf{Sig[\Sigma]}}(\mathcal{A}) = \E_{(x,w)\leftarrow G}\left[\Pr_H\left[V_{FS}^H(x\|m,\sigma): (m,\sigma) \leftarrow \mathcal{A}^H(x)\right]\right] \, .
		\end{align}
		
		Note that if $\mathsf{\Sigma}$ is a proof of knowledge, so is $\mathsf{\Sigma^*}$. Our Corollary~\ref{corPoK} assures that if $\mathsf{\Sigma^*}$ is a proof of knowledge, then also $\mathsf{FS[\Sigma^*]}$ is a proof of knowledge. 
		
		For fixed instance $x$, let $X$ be the set of instance/message strings $x'\|m$ where $x' = x$. We apply the knowledge extractor from Definition~\ref{def:PoK} to the adaptive FS-attacker $\mathcal{A}^H(x)$ that has $x$ hard-wired and outputs it along with a message $m$ and the proof/signature $\sigma$: There exists a knowledge extractor $\mathcal{E}$, constants $d,e$ and a polynomial $p$ (all independent of $x$) such that
		\begin{align} 
		\begin{split} \label{eq:PoK}
		\Pr_H&\left[ x'\|m\in X \wedge V_{FS}^H(x'\|m,\sigma): (x'\|m,\sigma)\leftarrow \mathcal{A}^H(x) \right] \\
		&\leq
		\left( \Pr\bigl[x'\|m \in X \,\wedge\, (x',w)\in R: (x'\|m,w)\leftarrow \mathcal{E}^{\mathcal{A}}\bigr] q^e p(\eta) + \mu(\eta) \right)^{1/d}
		\end{split}
		\end{align}
		
		Finally, taking the expected value of~\eqref{eq:PoK} over the choice of the instance $x$ according to the hard-instance generator $G$, we obtain that the left hand side equals $\mathrm{Adv}^{\EUFNMA}_{\mathsf{Sig[\Sigma]}}(\mathcal{A})$. For the right-hand side, we can use the concavity of $(\cdot)^{1/d}$ (note that we can assume without loss of generality that $d>1$) and apply Jensen's inequality to obtain
		\begin{align*}
		\E_{x \leftarrow G}& \left[ \left(  \Pr\bigl[x'\|m \in X \,\wedge\, (x',w)\in R: (x'\|m,w)\leftarrow \mathcal{E}^{\mathcal{A}}\bigr]q^e p(\eta) + \mu(\eta) \right)^{1/d} \right] \\
		&\leq
		\left( \E_{x \leftarrow G}  \Pr\bigl[x'\|m \in X \,\wedge\, (x',w)\in R: (x'\|m,w)\leftarrow \mathcal{E}^{\mathcal{A}}\bigr]q^e p(\eta) + \mu(\eta) \right)^{1/d} \, .
		\end{align*}
		Note that the expected probability is the success probability of the extractor to produce a witness $w$ matching the instance $x$. As long as the relation $R$ is hard according to Definition~\ref{def:hardrelation}, this success probability is negligible, proving our claim.
		
		\qed\end{proof}

	If we wish for unforgeability \textit{under chosen-message attack}, zero-knowledge is required as well. \cite{Unruh2017} and \cite{Kiltz2017} contain partial results that formalize this intuition, but they were unable to derive the extractability of the non-interactive proof system. Instead, they modify the \sigp to have a \emph{lossy mode}~\cite{AbdallaFLT12}, i.e. a special key-generation procedure that produces key pairs whose public keys are computationally indistinguishable from the real ones, but under which it is impossible for any (even unbounded) quantum adversary to answer correctly.
	
	Our new result above completes these previous analyses, so that we can now state precise conditions under which a \sigp gives rise to a (strongly) unforgeable Fiat-Shamir signature scheme, without the need for lossy keys.
	
	\begin{thm} \label{thm:CMA}
		Let $\mathsf{\Sigma}$ be \sigp for some hard relation $R$, with superpolynomially sized challenge space $\mathcal{C}$ and the proof-of-knowledge property according to Definition~\ref{def:PoK}. Assume further that $\mathsf{\Sigma}$ is $\varepsilon$-perfect (non-abort) honest-verifier zero-knowledge (naHVZK), has $\alpha$ bits of min entropy and computationally unique responses as defined in~\cite{Kiltz2017}. Then, $\mathsf{Sig[\Sigma]}$ fulfills $\sEUFCMA$ security.
	\end{thm}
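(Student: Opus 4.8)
The plan is to reduce $\sEUFCMA$ security of $\mathsf{Sig[\Sigma]}$ to the $\EUFNMA$ security already established in Theorem~\ref{thm:NMA}, following the template of \cite{Kiltz2017,Unruh2017} but now with the missing extractability ingredient supplied by our Corollary~\ref{corPoK}. Concretely, the $\varepsilon$-naHVZK property will let me answer signing queries without the secret key, turning any chosen-message adversary into a no-message adversary, while computationally unique responses will rule out the only remaining way a \emph{strong} forgery could arise, namely by recycling the commitment of an already-answered signing query.

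First I would perform a game hop that replaces the honest signing oracle $\mathbf{Sig}$. On input a message $m$, instead of running $P_{FS}^H(x\|m,w)$, the modified oracle runs the naHVZK simulator to obtain a (non-abort) transcript $(a,c,z)$, reprograms the random oracle so that $H(x\|m,a) := c$, and returns $\sigma=(a,z)$. The crucial point is that this oracle no longer depends on the witness $w$. Bounding the distinguishing advantage of $\mathcal{A}$ between the two games requires two ingredients: the $\varepsilon$-perfect naHVZK property, which guarantees that each simulated transcript is $\varepsilon$-close to an honestly produced one, and an adaptive reprogramming argument in the QROM. Here the assumption that $\mathsf{\Sigma}$ has $\alpha$ bits of min-entropy is what makes reprogramming go essentially unnoticed: since each commitment $a$ is almost uniform with $\alpha$ bits of entropy, the point $(x\|m,a)$ at which $H$ is reprogrammed carries negligible amplitude in $\mathcal{A}$'s query register, so the reprogramming changes $\mathcal{A}$'s view only by a term of order $q_H\sqrt{q_S}\,2^{-\alpha/2}$, where $q_H$ and $q_S$ are the numbers of hash and signing queries.

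With the signing oracle now independent of $w$, the whole interaction can be run by a no-message adversary $\mathcal{B}$ that plays against its own random oracle $H$, presenting to $\mathcal{A}$ the reprogrammed oracle $H'$ and forwarding the eventual forgery $(m^\ast,\sigma^\ast=(a^\ast,z^\ast))$. I would then distinguish two cases according to whether the verification point $(x\|m^\ast,a^\ast)$ coincides with one of the reprogrammed points $(x\|m_j,a_j)$ arising from the signing queries. If it does not, then $H'$ and $H$ agree there, so the forgery verifies under the un-reprogrammed oracle and constitutes a genuine $\EUFNMA$ forgery, which Theorem~\ref{thm:NMA} rules out up to negligible probability. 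If it does coincide with some $(x\|m_j,a_j)$, then $c^\ast=H'(x\|m^\ast,a^\ast)=c_j$ equals the simulated challenge, and since the forgery is \emph{strong} we must have $\sigma^\ast\neq\sigma_j$ and hence $z^\ast\neq z_j$; but both $z_j$ and $z^\ast$ are accepting responses for the same pair $(a_j,c_j)$, directly contradicting computationally unique responses. Summing the (negligible) bounds from the two cases together with the game-hop error then yields the claim.

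The step I expect to be the main obstacle is making the adaptive reprogramming argument fully rigorous against a quantum $\mathcal{A}$ that forces the signing-point reprogramming \emph{adaptively} and interleaved with superposition hash queries; this is where the min-entropy bound must be combined carefully with the number of queries, and where the $\varepsilon$-naHVZK slack and the treatment of aborts (so that simulated $\bot$-signatures match the real rejection-sampling distribution) have to be folded in without blowing up the loss. The case analysis on the forgery and the reduction to computationally unique responses are comparatively routine once the witness-free signing oracle is in place.
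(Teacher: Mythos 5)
Your proposal is correct and follows essentially the same route as the paper: the paper's proof simply invokes Theorem~3.3 of \cite{Kiltz2017} (which encapsulates exactly the naHVZK-based signing-oracle simulation, the min-entropy/adaptive-reprogramming bound, and the computationally-unique-responses case analysis you sketch) to reduce $\sEUFCMA$ to $\EUFNMA$, and then concludes via Theorem~\ref{thm:NMA}. The only difference is that you re-derive the internals of that known reduction rather than citing it as a black box.
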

	\begin{proof}
		By Theorem~3.3 of \cite{Kiltz2017}, we can use the naHVZK, min-entropy and computationally-unique-response properties of $\mathsf{\Sigma}$ to reduce an $\sEUFCMA$ adversary to an $\EUFNMA$ adversary\footnote{See also Theorem~25 in \cite{Unruh2017} for a different proof technique.}. The conclusion then follows immediately from our Theorem~\ref{thm:NMA} above.
		\qed \end{proof}

	\section{Extractable \sigps from quantum computationally unique responses}\label{seccollapsing}

	In the last section, we have seen that the proof-of-knowledge property of the underlying \sigp is crucial for a Fiat-Shamir signature scheme to be unforgeable. In \cite{Unruh2012}, Unruh proved that special soundness (a witness can be constructed efficiently from two different accepting transcripts) and perfect unique responses are sufficient conditions for a \sigp to achieve this property in the context of quantum adversaries. The perfect-unique-responses property is used to show that the final measurement of the \sigp adversary that produces the response is nondestructive conditioned on acceptance. This property ensures that the extractor can measure the response, and then rewind ``as if nothing had happened''.
	
	A natural question is therefore which other property except the arguably quite strict condition of perfect unique responses is sufficient to imply extractability together with special soundness. In \cite{Ambainis2014}, the authors show that computationally unique responses is insufficient to replace perfect unique responses.  A \sigp has computationally unique responses if the verification relation $V$ is collision-resistant from responses to commitment-challenge pairs in the sense that it is computationally hard to find two valid responses for the same commitment-challenge pair.
	
	In \cite{Unruh2016}, Unruh introduced the notion of collapsingness, a quantum generalization of the collision-resistance property for hash functions. It is straight-forward to generalize this notion to apply to binary relations instead of just functions.
	
	\begin{defi}[generalized from \cite{Unruh2016}] \label{def:collapsing}
		Let $R:\mathcal X\times\mathcal Y\to\{0,1\}$ be a relation with $|X|$ and $|Y|$ superpolynomial in the security parameter $\eta$, and define the following two games for polynomial-time  two-stage adversaries $\mathcal A=(\mathcal A_1,\mathcal A_2)$,
		\begin{center}
			\begin{tabular}{lcl}
				$\mathrm{Game\ 1:}$\switch{\,}{&&\\}
				$(S,X,Y)\leftarrow\mathcal A_1$, $r\leftarrow R(X,Y)$,&$X\leftarrow\mathcal M(X)$,& $Y\leftarrow\mathcal M(Y)$, $b\leftarrow\mathcal A_2(S,X,Y)$\\[2mm]
				$\mathrm{Game\ 2:}$\switch{\,}{&&\\}
				$(S,X,Y)\leftarrow\mathcal A_1$, $r\leftarrow R(X,Y)$,&&$Y\leftarrow\mathcal M(Y)$, $b\leftarrow\mathcal A_2(S,X,Y)$.
			\end{tabular} 
		\end{center}
		Here, $X$ and $Y$ are registers of dimension $|X|$ and $|Y|$, respectively, $\mathcal M$ denotes a measurement in the computational basis, and applying $R$ to quantum registers is done by computing the relation coherently and measuring it. $R$ is called \emph{collapsing from $\mathcal X$ to $\mathcal Y$}, if an adversary cannot distinguish the two experiments if the relation holds, i.e. if for all adversaries $\mathcal A$ it holds that 
		\begin{equation}
		\left|\Pr_{\mathcal A,\ \mathrm{Game\ 1}}\left[r=b=1\right]-\Pr_{\mathcal A,\ \mathrm{Game\ 2}}\left[r=b=1\right]\right|\le\mathrm{negl}(\eta).
		\end{equation}
	\end{defi}
	Note that this definition is equivalent to Definition 23 in \cite{Unruh2016} for functions, i.e. if $R(x,y)=1$ if and only if $f(x)=y$ for some function $f$.
	
	Via the relation that is computed by the second stage of the verifier, the collapsingness property can be naturally  defined for \sigps.
	\begin{defi}[Quantum computationally unique responses]\label{def:qcur}
		A \sigp has \emph{quantum computationally unique responses}, if the verification predicate  $V(x,\cdot,\cdot,\cdot): \mathcal Y\times \mathcal C\times \mathcal Z\to\{0,1\}$ seen as a relation between $ \mathcal Y\times \mathcal C$ and $ \mathcal Z$ is collapsing from $\mathcal Z$ to $\mathcal Y\times \mathcal C$, where $\mathcal Y$, $\mathcal C$ and $\mathcal Z$ are the commitment, challenge and response spaces of the protocol, respectively.
	\end{defi}
	Intuitively, for fixed commitment-challenge pairs, no adversary should be able to determine whether a superposition over successful responses $z$ has been measured or not. As in the case of hash functions (where collapsingness is a natural stronger quantum requirement than collision-resistance), quantum computationally unique responses is a natural stronger quantum requirement than computationally unique responses. 
	
	The following is a generalization of Theorem 9 in \cite{Unruh2012} where the assumption of perfect unique responses is replaced by the above quantum computational version. Additionally, we relax the special soundness requirement to {\em $t$-soundness}, which requires that for any first message $a$, for uniformly random chosen challenges $c_1,\ldots,c_t$, and for any responses $z_1,\ldots,z_t$ with $V(x,a_i,c_i,z_i)$ for all $i \in \{1,\ldots,t\}$, a witness $w$ for $x$ can be efficiently computed except with negligible probability (over the choices of the $c_i$). 
	
	
	\begin{thm}[Generalization of Theorem 9 from \cite{Unruh2012}]\label{thm:quantum-cur2PoK}
		Let $\Pi$ be a \sigp with $t$-soundness for some constant $t$ and with quantum computationally unique responses. Then $\Pi$ is a computational proof of knowledge as in Definition~\ref{def:PoK}. 
	\end{thm}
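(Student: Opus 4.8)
The plan is to generalize Unruh's extractor from \cite{Unruh2012}, replacing the role of perfect unique responses by the collapsing property of Definition~\ref{def:qcur} and the role of special ($2$-)soundness by $t$-soundness. Concretely, I would build a black-box, polynomial-time knowledge extractor $\mathcal{K}$ that, given $\mathcal{A}$, first runs $\mathcal{A}$'s commitment stage and measures the instance-commitment pair $(x,a)$, and then repeatedly feeds fresh uniformly random challenges $c_1,\dots,c_t\in\mathcal{C}$ to the response stage, measuring a response $z_i$ and checking $V(x,a,c_i,z_i)$ after each. Between two consecutive runs, $\mathcal{K}$ rewinds by applying the inverse of the response-stage unitaries, attempting to restore the post-commitment state so that the next challenge can be answered. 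If all $t$ checks accept and the challenges $c_1,\dots,c_t$ are pairwise distinct, $\mathcal{K}$ invokes the efficient $t$-soundness witness-reconstruction algorithm on $(x,a,(c_i,z_i)_{i=1}^t)$ to output a witness $w$; a challenge collision occurs only with probability $O(t^2/|\mathcal{C}|)$, which is negligible since $\mathcal{C}$ is superpolynomial. Because $\mathcal{K}$ measures $x$ together with $a$ up front, its output instance has the same distribution as $\mathcal{A}$'s forged instance, so the guarantee will hold for every subset $X\subseteq\mathcal{X}$ as required by Definition~\ref{def:PoK}.

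The heart of the argument is that this rewinding succeeds with only negligible loss per round, which is exactly where quantum computationally unique responses enters. In the perfect-unique-responses case, conditioned on $V(x,a,c,\cdot)=1$ the valid response is unique, so measuring $z_i$ is gentle and the inverse unitaries restore the pre-challenge state. In our setting there may be many valid responses, so measuring $z_i$ could genuinely collapse the state. However, Definition~\ref{def:qcur} states that $V$ is collapsing from $\mathcal{Z}$ to $\mathcal{Y}\times\mathcal{C}$, i.e.\ no polynomial-time distinguisher can tell whether the response register was measured or merely checked coherently. I would therefore introduce a hybrid extractor $\mathcal{K}'$ that, at each round, does \emph{not} measure $z_i$ but only computes $V$ coherently, recording the accept bit while keeping $z_i$ in superposition; for $\mathcal{K}'$ the entire interaction is unitary up to the binary accept measurements, so the rewinding is exact and the probability of $t$ coherent accepting rounds is easy to lower-bound. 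The collapsing property then guarantees, via a standard hybrid over the $t$ rounds, that replacing each coherent check in $\mathcal{K}'$ by the actual measurement of $z_i$ in $\mathcal{K}$ changes the overall acceptance probability by at most a negligible amount, since a noticeable change would yield a distinguisher against collapsingness.

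To assemble the bound of Definition~\ref{def:PoK}, write $\varepsilon := \Pr[x\in X \wedge v=accept : (x,v)\leftarrow\langle\mathcal{A},\mathcal{V}\rangle]$ for the adversary's success probability. Using the exact reversibility of $\mathcal{K}'$, each fresh random challenge is answered from the same restored state, so a standard second-moment calculation as in \cite{Unruh2012} lower-bounds the probability that all $t$ coherent rounds accept (with $x\in X$) by a fixed power $\varepsilon^{d}/\mathrm{poly}(\eta)$ for a constant $d$ depending only on $t$. Transferring this to $\mathcal{K}$ through the collapsing hybrid costs only an additive negligible term, and subtracting the negligible challenge-collision probability and the negligible failure probability of the $t$-soundness reconstruction yields exactly $\Pr[x\in X \wedge (x,w)\in R : (x,w)\leftarrow\mathcal{K}^{\mathcal{A}}] \geq \varepsilon^{d}/p(\eta) - \kappa(\eta)$. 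Note that collapsingness only holds against polynomial-time adversaries, which is consistent with the conclusion being a \emph{computational} proof of knowledge.

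The main obstacle I expect is the rewinding analysis: making precise that the inverse unitaries restore a usable state across all $t$ rounds, and quantifying the cumulative disturbance caused by the $t$ real response measurements purely in terms of the collapsing advantage. The delicate point is that the collapsing reduction must be run \emph{inside} the rewinding loop, so the distinguisher extracted from a gap between $\mathcal{K}$ and $\mathcal{K}'$ must itself simulate all the remaining rounds; ensuring this distinguisher is polynomial-time and that the per-round errors accumulate additively rather than compounding harmfully is the crux. A secondary subtlety is the adaptive case: since $\mathcal{A}$ chooses $x$, I must fix $(x,a)$ by the initial measurement and check that this measurement is compatible with the collapsing hybrid, which is precisely what lets the same analysis handle an arbitrary predicate $X$ as demanded by Definition~\ref{def:PoK}.
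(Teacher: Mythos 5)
Your proposal is correct and follows essentially the same route as the paper's own (sketched) argument: a rewinding extractor that issues $t$ fresh random challenges, a hybrid in which only the binary verification outcome is measured so that the run is analyzed via the generalized projection bound $F \geq V^{2t-1}$ (Lemma~\ref{lemma:FvsV}), and the collapsing property of Definition~\ref{def:qcur} to argue that actually measuring each response $z_i$ changes the acceptance probability only negligibly. Your hybrid extractor $\mathcal{K}'$ and the round-by-round collapsing reduction are precisely the mechanism the paper invokes when it says that measuring the response is computationally indistinguishable from measuring whether it satisfies $V$.
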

	The proof follows very much the proof of Theorem 9 in \cite{Unruh2012}, up to some small extensions; thus, we only give a proof sketch here. 
	
	\begin{proof}[sketch]
		We consider the following extractor $\cal K$. It runs $\cal A$ to the point where it outputs $a$. Then, it chooses a random challenge $c_1$ and sends it to $\cal A$, and obtains a response $z_1$ by measuring $\cal A$'s corresponding register. $\cal K$ then rewinds $\cal A$ (on the measured state!) and chooses and sends to $\cal A$ a fresh random challenge $c_2$, resulting in a response $z_2$, etc., up to obtaining response $z_t$. If $V(x,a_i,c_i,z_i)$ for all $i \in \{1,\ldots,t\}$ then $\cal K$ can compute $w$ except with negligible probability by the $t$-soundness property; otherwise, it aborts.  
		
		It remains to analyze the probability, denoted by $F$ below, that $V(x,a_i,c_i,z_i)$ for all $i$. If the \sigp has {\em perfect} unique responses then measuring the response $z$ is {\em equivalent} to measuring whether the response satisfies the verification predicate $V$ (with respect to $x,a,c$). Lemma~\ref{lemma:FvsV} in Appendix~\ref{app:ProjBound}, which generalizes Lemma 7 in \cite{Unruh2012}, allows us then to control the probability $F$ by means of the probability $V$ that $\cal A$ succeeds in convincing the verifier in an ordinary run (this holds for an arbitrary but fixed $a$, and on average over $a$ by means of Jensen's inequality). 
		If the \sigp has {\em quantum computationally} unique responses instead, then measuring the response $z$ is {\em computationally indistinguishable} from measuring whether the response satisfies the verification predicate, and so there can only be a negligible loss in the success probability of $\cal K$ compared to above. \qed 
	\end{proof}

	We expect the above theorem to be very useful in practice, for the following reason. Usually, \sigps deployed in Fiat-Shamir signature schemes have computationally unique responses to ensure strong unforgeability via Theorem \ref{thm:CMA} or similar reductions. On the other hand, only very artificial separations between the notions of collision resistance and collapsingness for hash functions are known (e.g. the one presented in~\cite{Zhandry2017}). It is therefore plausible that many \sigps deployed in strongly unforgeable Fiat-Shamir signature schemes have quantum computationally unique responses as well. In the next section we take a look at a couple of examples that form the basis of some signature schemes submitted to the NIST competition for the standardization of post-quantum cryptographic schemes.

	\section{Application to NIST submissions}\label{secNIST}
	In the previous sections we gave sufficient conditions for a Fiat-Shamir signature scheme to be existentially unforgeable in the QROM. Several schemes of the Fiat-Shamir kind have made it into the second round of the NIST post-quantum standardization process. In this section we outline how our result might be applied to some of these schemes, and under which additional assumptions. We leave the problems of applying our techniques to the actual (highly optimized) signature schemes and of working out the concrete security bounds  for future work.
	
	\subsection{\textsf{Picnic}}
	In order to obtain QROM-security, \textsf{Picnic} uses the \emph{Unruh transform} \cite{Unruh2015} instead of the Fiat-Shamir transformation, incurring a 1.6x loss in efficiency (according to \cite{Chase2017}) compared to \textsf{Fish}, which is the same scheme under plain Fiat-Shamir.

	The underlying sigma-protocol for these schemes is \textsf{ZKB++} \cite{Chase2017}, an optimized version of \textsf{ZKBoo} \cite{GMO}, which uses an arbitrary one-way function $\phi$, a commitment scheme $\mathsf{COM}$ and a multi-party computation protocol to prove knowledge of a secret key. Roughly, a prover runs the multi-party protocol `in its head' (i.e.\ simulates the three agents from the protocol, see \cite{Ishai2007}) to compute $pk:= \phi(sk)$. Only a prover who knows the secret key can produce the correct view of all three agents, but the public key suffices to verify the correctness of two of the views. In the first round, the prover uses $\mathsf{COM}$ to commit to all three views separately, and sends these commitments to the verifier. The verifier replies with a random challenge $i\in \{1,2,3\}$, to which the prover in turn responds by opening the $i$-th and $i\!+\!1$-th commitment.
	
	$\mathsf{ZKBoo}$ does not specify a concrete commitment scheme for $\mathsf{COM}$. A natural option is to commit by hashing the input together with some random bits.
	\begin{cor}\label{cor:ZKBoo}
		$\mathsf{Sig[ZKBoo]}$ is strongly existentially unforgeable in the QROM when $\mathsf{COM}$ is instantiated with a collapsing hash function $H$.\footnote{We could in principle decide to model $H$ as a random oracle as well, (since we are already in the QROM anyway), in which case the collapsingness of $H$ would follow from~\cite{Unruh2016}. However, we formalized security of \sigps (soundness \& PoK) {\em in the standard model}. We could extend these definitions to work in the QROM, but then Lemma \ref{lemma:StatVsAaptSound} and \ref{lemma:StatVsAaptPoK} would no longer follow. Therefore, we choose to simply assume collapsingness of $H$.}
	\end{cor}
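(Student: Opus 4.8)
The plan is to derive the statement from Theorem~\ref{thm:CMA} by checking that $\mathsf{ZKBoo}$ meets all of its hypotheses, where throughout I understand $\mathsf{ZKBoo}$ to be repeated sufficiently many times in parallel so that its challenge space $\mathcal{C} = \{1,2,3\}^n$ is superpolynomial (as Theorem~\ref{thm:CMA} requires) and its knowledge error negligible. Several of the required properties are simply inherited from the standard classical analysis of $\mathsf{ZKBoo}$ and do not interact with the quantum setting: the relation $R = \{(pk,sk) : \phi(sk)=pk\}$ is hard by the one-wayness of $\phi$; the protocol is $\varepsilon$-perfect naHVZK, since the usual $\mathsf{ZKBoo}$ simulator honestly produces the two opened views and commits to an arbitrary value for the unopened one (under the hiding property of $\mathsf{COM}$); and the first message has $\alpha$ bits of min-entropy because it consists of commitments to freshly randomized views. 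The only genuinely quantum ingredient, and the place where the collapsing assumption on $\mathsf{COM}$ is used, is the proof-of-knowledge property.

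To establish the proof-of-knowledge property I would invoke Theorem~\ref{thm:quantum-cur2PoK}, which requires $t$-soundness for some constant $t$ together with quantum computationally unique responses (Definition~\ref{def:qcur}). The $t$-soundness follows from the MPC-in-the-head structure of $\mathsf{ZKBoo}$: from a small constant number of accepting transcripts sharing the first message but using distinct challenges, one recovers the views of all three simulated parties, and by correctness of the underlying multiparty computation this reconstructs a preimage $sk$ of $pk$ under $\phi$, i.e.\ a witness. Handling the parallel-repeated version only requires a routine bookkeeping argument to identify enough coordinates in which the challenges differ, so I would not expect difficulty here.

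The crux, and the step I expect to be the main obstacle, is establishing quantum computationally unique responses, i.e.\ that the verification relation $V(x,a,\cdot,\cdot)$ is collapsing from the response space to the commitment-challenge space in the sense of Definition~\ref{def:collapsing}. Here a response $z$ consists of openings $(\mathrm{view}_j, r_j)$ of the commitments contained in $a$, and $V$ checks both that $\mathsf{COM}(\mathrm{view}_j; r_j)$ equals the committed value in $a$ and that the revealed views are mutually consistent and consistent with $pk$. I would reduce collapsingness of $V$ to collapsingness of $\mathsf{COM}$: from any adversary that distinguishes whether a superposition of accepting responses has been measured, one builds an adversary against the collapsing property of $H$ by embedding the hash challenge into the opened commitments, while treating the deterministic, efficiently checkable MPC-consistency part of $V$ as a relation applied coherently (as Definition~\ref{def:collapsing} permits) rather than as part of the hash. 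The subtle points are that a response contains several commitment openings at once, so one argues collapsingness via a hybrid over the opened commitments and over the parallel repetitions, and that the consistency predicate must be shown to commute with the measurement so it can be moved outside the reduction. Once quantum computationally unique responses is in hand, Theorem~\ref{thm:quantum-cur2PoK} yields that $\mathsf{ZKBoo}$ is a computational proof of knowledge; since this quantum notion is strictly stronger than the (classical) computationally unique responses demanded by Theorem~\ref{thm:CMA}, all hypotheses of that theorem are met, and it delivers $\sEUFCMA$ security of $\mathsf{Sig[ZKBoo]}$, as claimed.
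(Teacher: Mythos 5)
Your proposal is correct and follows essentially the same route as the paper: establish quantum computationally unique responses from the collapsingness of $\mathsf{COM}$ (the paper invokes the concurrent-composition result of \cite{Fehr2018} where you sketch the equivalent hybrid argument over the opened commitments), combine with $3$-soundness via Theorem~\ref{thm:quantum-cur2PoK}, and conclude with Theorem~\ref{thm:CMA}. You are in fact more explicit than the paper about the remaining hypotheses (hardness of $R$, naHVZK, min-entropy, and the parallel repetition needed for a superpolynomial challenge space), which the paper leaves implicit.
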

	\begin{proof}
		 Since the response of the prover in the third round consists only of openings to the commitments $c_i,c_{i\!+\!1}$, i.e. preimages of $c_i$ and $c_{i\!+\!1}$ under $H$, and since collapsingness is closed under concurrent composition \cite{Fehr2018}, the collapsingness of $H$ implies that \textsf{ZKBoo} has quantum computational unique responses. $\mathsf{ZKBoo}$ further has $3$-soundness, and thus the claim follows using Theorems \ref{thm:quantum-cur2PoK} and  \ref{thm:CMA}.
		\qed\end{proof}
	
	\textsf{ZKB++} improves on \textsf{ZKBoo} by introducing optimizations specific to the signature context, which complicate the analysis of the overall scheme. We therefore leave the adaption of Corollary \ref{cor:ZKBoo} to \textsf{ZKB++} and \textsf{Fish} for future work.
	
	We also point out that \textsf{Picnic2} (a later version of \textsf{Picnic}) is not $t$-sound because a witness can be computed from 3 responses only under certain restrictions on the challenges. However, this can be taken care of by a variation of the $t$-soundness property, as proven in Lemma~\ref{lemma:FvsV2} in Appendix~\ref{app:ProjBound}. 
	
	\subsection{Lattice-based Fiat-Shamir signature schemes  --  \textsf{CRYSTALS-Dilithium} and \textsf{qTesla}}
	In \cite{Lyubashevsky2009} and \cite{Lyubashevsky2012}, Lyubashevsky developed a Fiat-Shamir signature scheme based on (ring) lattice assumptions. In the following, we explain the lattice case and mention ring-based lattice terms in parentheses. The underlying sigma protocol, which forms the basis of the NIST submissions \textsf{CRYSTALS-Dilithium} and \textsf{qTesla}, can be roughly described as follows. The instance is given by a key pair $((A, T), S)$, with $T=AS$. Here, $A$ and $S$ are matrices of appropriate dimensions over a finite field (polynomials of appropriate degree), and $S$ is \emph{small}. For the first message to the verifier, the prover selects a random short vector (small polynomial) $y$, and sends over $Ay$. The second message, from the verifier to the prover, is a random vector (polynomial) $c$ with entries (coefficients) in $\{-1,0,1\}$ and a small Hamming weight. The third message, i.e. the response of the prover, is $z=Sc+y$, which is short (small) as well. The prover actually sends 
	$z$ only with a particular probability, which is chosen so as to make the distribution of (sent) $z$ independent of $S$. Otherwise, it aborts and tries again. Verification is done by checking whether $z$ is indeed short (small), and whether $Az-Ay = Tc$. Let us denote this protocol 
	by $\mathsf{Lattice\Sigma}$. In the following we restrict our attention to the lattice case, but we expect that one can do a similar analysis for the ring-based schemes.
	
	The security of the scheme is, in the lattice case, based on the \textsc{Short Integer Solution} (\textsc{SIS}) problem, which essentially guarantees that it is hard to find an integral solution to a linear system that has a small norm. The computationally unique responses property for the simple \sigp described above, in fact, follows directly from \textsc{SIS}: If one can find a vector $c$ and two short vectors $x_i$, $i=1,2$ such that $Ax_0=c=Ax_1$, then the difference $x=x_1-x_0$ is a short solution to the linear system $Ax=0$.
	
	Another way to formulate the computationally unique responses property for the above \sigp is as follows. Let $S\subset \mathbb F_q^n$ be the set of short vectors. 
	Let $f_A: S\to \mathbb F_q^m$ be the restriction to $S$ of the linear map given by the matrix $A\in\mathbb F_q^{m\times n}$. The \sigp above has computationally unique responses if and only if $f_A$ is collision resistant. 
	As pointed out at the end of Section \ref{seccollapsing}, the known examples that separate the collision resistance and collapsingness properties are fairly artificial. Hence it is a natural to assume that $f_A$ is collapsing as well. 
	
	\begin{ass}\label{ass-colSIS}
		For $m,n$ and $q$ polynomial in the security parameter $\eta$, the function family $f_A$ keyed by a uniformly random matrix $A\in\mathbb F_q^{m\times n}$ is collapsing.
	\end{ass}
	Under Assumption \ref{ass-colSIS}, $\mathsf{Lattice\Sigma}$ has quantum computational unique responses, and hence gives rise to an unforgeable Fiat-Shamir signature scheme.
	\begin{cor}
		Under Assumption \ref{ass-colSIS}, $\mathsf{Sig[Lattice\Sigma]}$ is strongly existentially unforgeable in the QROM.
	\end{cor}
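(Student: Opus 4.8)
The plan is to obtain $\sEUFCMA$ security by invoking Theorem~\ref{thm:CMA}, which reduces the whole claim to checking that $\mathsf{Lattice\Sigma}$ satisfies each of that theorem's hypotheses: that it is a \sigp for a hard relation with superpolynomial challenge space, that it is a proof of knowledge in the sense of Definition~\ref{def:PoK}, and that it is $\varepsilon$-perfect naHVZK, has $\alpha$ bits of min-entropy, and has computationally unique responses. Most of these are either standard facts about Lyubashevsky's scheme or have already been argued in the text preceding Assumption~\ref{ass-colSIS}; the proof is therefore mostly an assembly, which I would organize as a checklist.

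First I would pin down the hard relation and the challenge space. The relation is the one underlying the key pair $((A,T),S)$ with $T=AS$ and $S$ short; its hardness follows from \textsc{SIS}, as noted in the text. The challenge space consists of vectors with entries in $\{-1,0,1\}$ of bounded Hamming weight, and I would record the (routine) counting argument showing this space is superpolynomially sized in $\eta$, as required by Theorem~\ref{thm:CMA}. The naHVZK and min-entropy properties are the by now standard consequences of rejection sampling (making the distribution of the sent $z$ independent of $S$) and of $Ay$ being spread out over a large set of short $y$; I would cite these rather than reprove them. Classical computationally unique responses follow from collision resistance of $f_A$, which is implied by Assumption~\ref{ass-colSIS} since collapsing implies collision resistance.

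The core step is the proof-of-knowledge property, which I would obtain from Theorem~\ref{thm:quantum-cur2PoK}. That theorem needs two ingredients: $t$-soundness for a constant $t$, and quantum computationally unique responses. For $t$-soundness I would use the standard special-soundness argument: from two accepting transcripts $(Ay,c_1,z_1)$ and $(Ay,c_2,z_2)$ with $c_1\neq c_2$ one has $A(z_1-z_2)=T(c_1-c_2)$, yielding a short relaxed witness, so $t=2$ suffices (working, as is usual for these schemes, with a relaxed relation whose hardness again follows from \textsc{SIS}). Quantum computationally unique responses is precisely the statement, already asserted in the text, that under Assumption~\ref{ass-colSIS} the verification predicate seen as a relation is collapsing from $\mathcal Z$ to $\mathcal Y\times\mathcal C$ (Definition~\ref{def:qcur}); I would make this explicit by observing that for a fixed commitment $Ay$ and challenge $c$ the set of valid responses is exactly the set of short $z$ with $f_A(z)$ equal to the fixed value $Ay+Tc$, so that any collapsing-distinguisher for the verification relation would directly yield one for $f_A$.

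The hard part will be these last matching-up steps rather than any genuinely new estimate: carefully arguing that collapsingness of the affine-shifted, domain-restricted map $z\mapsto f_A(z)$ transfers to the verification relation of Definition~\ref{def:qcur}, and tracking the relaxed relation consistently through $t$-soundness, the extracted witness, and the hardness assumption, so that the relation for which knowledge is extracted is the same relation that \textsc{SIS} certifies to be hard. Once these are settled, chaining Theorem~\ref{thm:quantum-cur2PoK} (giving the proof-of-knowledge property) into Theorem~\ref{thm:CMA} (giving $\sEUFCMA$) completes the argument.
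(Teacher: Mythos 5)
Your proposal follows exactly the route the paper intends (and leaves largely implicit, since the corollary is stated without an explicit proof): Assumption~\ref{ass-colSIS} gives quantum computationally unique responses, which together with constant-$t$ soundness yields the proof-of-knowledge property via Theorem~\ref{thm:quantum-cur2PoK}, and Theorem~\ref{thm:CMA} then gives $\sEUFCMA$ security, with the naHVZK, min-entropy and (classical) unique-responses hypotheses checked off as standard facts about Lyubashevsky's scheme. Your explicit attention to the relaxed relation in the special-soundness step, and to transferring collapsingness of $f_A$ to the verification relation of Definition~\ref{def:qcur}, addresses details the paper glosses over and is exactly where the remaining work lies.
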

	As mentioned at the end of the introduction, 
	in their concurrent and independent work \cite{LZ19}, Lie and Zhandry show that $f_A$ satisfies their notion of \emph{weak}-collapsingness (assuming hardness of LWE), which roughly says that there is some non-negligible probability that the adversary \emph{does not} notice a measurement. Weak-collapsingness implies a similarly weakened variant of our property `quantum computational responses', which is still sufficient to let the proof of \mbox{Theorem \ref{thm:quantum-cur2PoK}} go through, albeit with a worse but still non-negligible success probability for the knowledge-extractor.

	\section{Acknowledgement}
	We thank Tommaso Gagliardoni and Dominique Unruh for comments on early basic ideas of our approach, and Andreas H\"ulsing, Eike Kiltz and Greg Zaverucha for helpful discussions. We thank Thomas Vidick for helpful remarks on an earlier version of this article.

	JD and SF were partly supported by the EU Horizon 2020 Research and Innovation
	Program Grant 780701 (PROMETHEUS).  JD, CM, and CS were supported by a NWO VIDI grant (Project No. 639.022.519). During finalization of this work JD has been partially funded by
	ERC-ADG project 740972 (ALGSTRONGCRYPTO).

	\bibliographystyle{alpha}
	\bibliography{QROM}

	\begin{appendix}
		
		\section{Proof of Lemma~\ref{lemma:StatVsAaptSound} and \ref{lemma:StatVsAaptPoK}} \label{app:proofs}
		
		\begin{proof}[of Lemma~\ref{lemma:StatVsAaptSound}] 
			Let ${\cal A}$ be an adaptive \sigp adversary, producing $x$ and $a$ in the first stage, and $z$ in the second stage. We then consider the following algorithms. ${\cal A}_{init}$ runs the first stage of ${\cal A}$ (using the same initial state), outputting $x$ and~$a$. Let $\ket{\psi_{x,a}}$ be the corresponding internal state at this point. Furthermore, for any possible $x$ and~$a$, ${\cal A}_{x,a}$ is the following static \sigp adversary. Its initial state is $\ket{\psi_{x,a}}\ket{a}$ and in the first stage it simply outputs $a$, and in the second stage, after having received the verifier's challenge, it runs the second stage of~${\cal A}$. 
			We then see that 
			\begin{align*}
			\Pr\bigl[x \not\in {\cal L} &\,\wedge\, v = accept :(x,v) \leftarrow \langle{\cal A}, {\cal V}\rangle \bigr] \\[1.2ex]
			&= \sum_{x_\circ \not\in {\cal L}} \Pr\bigl[x = x_\circ \,\wedge\, v = accept :(x,v) \leftarrow \langle{\cal A}, {\cal V}\rangle \bigr]\\
			&= \sum_{x_\circ \not\in {\cal L}} \sum_a \Pr\bigl[{\cal A}_{init} = (x_\circ,a)\bigr] \Pr\bigl[\langle{\cal A}_{x_\circ,a} , {\cal V}(x_\circ) \rangle  = accept \bigr] \, .
			\end{align*}
			Since $ \Pr\bigl[\langle{\cal A}_{x_\circ,a} , {\cal V}(x_\circ) \rangle = accept \bigr]$ is bounded by a negligible function, given that ${\cal A}_{x,a}$ is a (quantum polynomial-time/unbounded) static adversary, the claim follows. \qed
		\end{proof}
		
		\begin{proof}[of Lemma~\ref{lemma:StatVsAaptPoK}]
			Let ${\cal A}$ be an adaptive \sigp adversary, producing $x$ and $a$ in the first stage, and $z$ in the second stage. We construct a black-box knowledge extractor ${\cal K}_{ad}$ that works for any such ${\cal A}$. In a first step, ${\cal K}_{ad}^{\cal A}$ runs the first stage of ${\cal A}$ using the black-box access to ${\cal A}$ (and having access to the initial state of ${\cal A}$). Below, we call this first stage of ${\cal A}$ as ${\cal A}_{init}$. 
			This produces $x$ and $a$, and we write $\ket{\psi_{x,a}}$ for the corresponding internal state. Then, it runs ${\cal K}_{na}^{{\cal A}^{x,a}}$, where ${\cal K}_{na}$ is the knowledge extractor guaranteed to exist for static adversaries, and ${\cal A}^{x,a}$ is the static adversary that works as follows. It's initial state is $\ket{\psi_{x,a}}\ket{a}$ and in the first stage it simply outputs $a$, and in the second stage it runs the second stage of ${\cal A}$ on the state $\ket{\psi_{x,a}}$. Note that having obtained $x$ and $a$ and the state $\ket{\psi_{x,a}}$ as first step of ${\cal K}_{ad}^{\cal A}$, ${\cal K}_{na}^{{\cal A}^{x,a}}$ can then 
			be executed with black box access to (the second stage of)~${\cal A}$. 
			For any subset $X \subseteq {\cal X}$, we now see that
			\allowdisplaybreaks
			\begin{align*}
			\Pr&\left[x\in X\wedge (x,w)\in R : (x,w)\leftarrow \mathcal{K}_{ad}^{\mathcal{A}}\right] \\[0.5em]
			&= \sum_{x\in X} \sum_a \Pr\bigr[{\cal A}_{init} = (x,a) \bigl] \Pr\Bigl[(x,w)\in R : w\leftarrow \mathcal{K}_{na}^{\mathcal{A}^{x,a}} \Bigr] \\
			&\geq \sum_{x\in X} \sum_a  \Pr\bigr[{\cal A}_{init} = (x,a) \bigl] \cdot\frac{1}{p(\eta)} \cdot \Pr\bigl[\langle{\cal A}^{x,a} , {\cal V}(x)\rangle = accept \bigr]^d- \kappa(\eta)\\
			&\geq \frac{1}{p(\eta)}\bigg(\sum_{x\in X} \sum_a \Pr\bigr[{\cal A}_{init} = (x,a) \bigl] \Pr\bigl[\langle{\cal A}^{x,a} , {\cal V}(x)\rangle = accept \bigr]\bigg)^d -\kappa(\eta)\\
			&= \frac{1}{p(\eta)} \Pr\bigl[ x \in X \wedge v\!=\!1: (x,v)\leftarrow \langle{\cal A}^{x,a} , {\cal V}(x)\rangle \bigr]^d -\kappa(\eta) \, ,
			\end{align*}
			where the first inequality is because of the static proof-of-knowledge property, and the second is Jensen's inequality, noting that we may assume without loss of generality that $d \geq 1$. \qed
		\end{proof}

		\section{Generalization of Lemma 7 from \cite{Unruh2012}} \label{app:ProjBound}
		
		The following is a generalization of Lemma 7 from \cite{Unruh2012}. It relates the success probability of applying a random projection to a state vector with the success probability of sequentially applying $t$ random projections, where ``success probability'' here is in terms of the (average) square-norm of the projected state vector. This statement gives us the means to relate the probability of a interactive prover making the verifier accept with the probability of an extractor making the verifier accept $t$ times, when rewinding $t-1$ times and using a freshly random (and independent) challenge each time. 
		
		\begin{lem}\label{lemma:FvsV}
			Let $P_1,\ldots,P_n$ be projections and $\ket{\psi}$ a state vector, and set 
			$$
			V := \frac{1}{n} \sum_i \bra{\psi} P_i \ket{\psi} = \frac{1}{n} \sum_i \|P_i \ket{\psi}\|^2
			\qquad\text{and}\qquad
			F := \frac{1}{n^t}\sum_{i_1 \cdots i_t} \|P_{i_t} \cdots P_{i_1} \ket{\psi} \|^2 \, .
			$$
			Then $F \geq V^{2t-1}$. 
		\end{lem}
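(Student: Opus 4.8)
The plan is to recast both quantities through the averaged operator $P := \frac{1}{n}\sum_i P_i$ and the positive, trace-non-increasing map $\mathcal{E}(A) := \frac{1}{n}\sum_i P_i A P_i$. Since $0 \leq P_i \leq \mathbb{1}$, we have $0 \leq P \leq \mathbb{1}$, and directly $V = \bra{\psi} P \ket{\psi}$. For $F$ I would use $P_i^2 = P_i$ to rewrite $\|P_{i_t}\cdots P_{i_1}\ket{\psi}\|^2 = \bra{\psi} P_{i_1}\cdots P_{i_{t-1}} P_{i_t} P_{i_{t-1}} \cdots P_{i_1}\ket{\psi}$ and then average over the indices one layer at a time: summing over the innermost index $i_t$ produces a factor $P$ in the middle, while the surrounding layers assemble exactly into iterated applications of $\mathcal{E}$. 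This yields the clean identity
\[
F = \bra{\psi}\, \mathcal{E}^{t-1}(P)\, \ket{\psi}\,,
\]
reducing the lemma to an operator statement about $\mathcal{E}^{t-1}(P)$.

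The heart of the argument, and the step I expect to be the main obstacle, is the operator inequality $\mathcal{E}^{k}(P) \geq P^{2k+1}$ for all $k \geq 0$, which I would prove by induction; the lemma is the case $k = t-1$. The base case is the equality $\mathcal{E}^{0}(P) = P = P^{1}$. For the inductive step I need two ingredients: first, that $\mathcal{E}$ is monotone, i.e.\ $A \geq B \Rightarrow \mathcal{E}(A) \geq \mathcal{E}(B)$, which is immediate since $\mathcal{E}$ is a positive map; and second, the key fact that $\mathcal{E}(A) \geq P A P$ for every $A \geq 0$. This second fact is an instance of the operator convexity of the map $X \mapsto X A X$ in Hermitian $X$ (for fixed $A \geq 0$); concretely it follows by expanding $\sum_i (P_i - P) A (P_i - P) \geq 0$, which rearranges to $\sum_i P_i A P_i \geq n\, P A P$. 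Combining the two ingredients, monotonicity gives $\mathcal{E}^{k+1}(P) = \mathcal{E}\big(\mathcal{E}^{k}(P)\big) \geq \mathcal{E}\big(P^{2k+1}\big)$, and then $\mathcal{E}(P^{2k+1}) \geq P \cdot P^{2k+1} \cdot P = P^{2k+3}$ closes the induction.

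Finally I would sandwich the inequality at $k = t-1$ with $\ket{\psi}$ to obtain $F \geq \bra{\psi} P^{2t-1}\ket{\psi}$, and conclude with Jensen's inequality: writing the spectral decomposition of $P$ and viewing $\bra{\psi} \,\cdot\, \ket{\psi}$ as an expectation over the induced probability distribution on the eigenvalues $\lambda \in [0,1]$, convexity of $\lambda \mapsto \lambda^{2t-1}$ on $[0,1]$ gives $\bra{\psi} P^{2t-1}\ket{\psi} \geq \big(\bra{\psi} P \ket{\psi}\big)^{2t-1} = V^{2t-1}$. The one subtlety to watch is the direction of the estimates: a naive Cauchy--Schwarz on the sequential norms points the wrong way, so the real work is isolating $\mathcal{E}(A) \geq PAP$ as the correct convexity statement and threading it through the induction.
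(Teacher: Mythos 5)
Your proof is correct, and it takes a genuinely different route from the one in the paper. The paper (following Unruh's Lemma~7) starts from the other end: it writes $V^{2t-1}=(\bra{\psi}A\ket{\psi})^{2t-1}\leq\bra{\psi}A^{2t-1}\ket{\psi}$ with $A=\frac1n\sum_iP_i$, expands $A^{2t-1}$ as a sum over $2t-1$ indices, uses $P_k^2=P_k$ to fold the product symmetrically around the middle index into an inner product $\braket{\psi_{i_1\cdots i_t}}{\psi_{i_{2t-1}\cdots i_{t+1} i_t}}$, and then applies convexity of $\|\cdot\|_2^2$ to the average over the outer index blocks to land on $F$; it only writes this out for $t=3$ and asserts the general case. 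You instead package the combinatorics into the positive map $\mathcal{E}(A)=\frac1n\sum_iP_iAP_i$, prove the exact identity $F=\bra{\psi}\mathcal{E}^{t-1}(P)\ket{\psi}$, and reduce everything to the operator inequality $\mathcal{E}(A)\geq PAP$ (your expansion of $\sum_i(P_i-P)A(P_i-P)\geq 0$ is exactly right, as is the monotonicity of $\mathcal{E}$ used to thread it through the induction); the two arguments share only the final Jensen step $\bra{\psi}P^{2t-1}\ket{\psi}\geq V^{2t-1}$. What your version buys is a clean, index-free proof that handles general $t$ in one stroke and isolates the reusable operator lemma $\mathcal{E}(A)\geq PAP$; what the paper's version buys is closer contact with Unruh's original argument and its ``Claim 2'', which it cites as a black box. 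One small point worth making explicit in a write-up: the identity $F=\bra{\psi}\mathcal{E}^{t-1}(P)\ket{\psi}$ uses $P_{i_t}^2=P_{i_t}$ only at the innermost layer, and the final Jensen step needs $\ket{\psi}$ normalized so that the spectral weights $|\braket{\lambda}{\psi}|^2$ form a probability distribution --- both of which you have, since the $P_i$ are projections and $\ket{\psi}$ is a state vector.
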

		The case $t=2$ was proven in \cite[Lemma 7]{Unruh2012}. We show here how to extend the proof to $t = 3$; the general case works along the same lines. 
		
		\begin{proof}[of the case $t = 3$]
			For convenience, set $A:= \frac{1}{n} \sum_i P_i$ and $\ket{\psi_{ijk}} := P_k P_j P_i \ket{\psi}$. Then, using convexity of the function $x \mapsto x^5$ to argue the first inequality, we get
			\switch{
				\begin{align*}
				V^5 = (\bra{\psi} A \ket{\psi})^5 \leq \bra{\psi} A^5 \ket{\psi} = \frac{1}{n^5} \sum_{i j k \ell m} \bra{\psi} P_i P_j P_k P_\ell P_m \ket{\psi} = \frac{1}{n^5} \sum_{i j k \ell m} \braket{\psi_{ijk}}{\psi_{m \ell k}} \\
				= \frac{1}{n} \sum_k \bigg( \frac{1}{n^2} \sum_{ij} \bra{\psi_{ijk}}\bigg)\bigg(\frac{1}{n^2} \sum_{\ell m} \ket{\psi_{m \ell k}}\bigg) = \frac{1}{n} \sum_k \Big\| \frac{1}{n^2} \sum_{ij} \ket{\psi_{ijk}}\Big\|^2 \leq \frac{1}{n^3} \sum_{ijk} \| \ket{\psi_{ijk}}\|^2 = F \, ,
				\end{align*}
			}
			{
				\begin{align*}
				V^5 = &(\bra{\psi} A \ket{\psi})^5 = \bra{\psi} A^5 \ket{\psi} = \frac{1}{n^5} \sum_{i j k \ell m} \bra{\psi} P_i P_j P_k P_\ell P_m \ket{\psi}\\
				& = \frac{1}{n^5} \sum_{i j k \ell m} \braket{\psi_{ijk}}{\psi_{m \ell k}} 
				= \frac{1}{n} \sum_k \bigg( \frac{1}{n^2} \sum_{ij} \bra{\psi_{ijk}}\bigg)\bigg(\frac{1}{n^2} \sum_{\ell m} \ket{\psi_{m \ell k}}\bigg)\\
				&\hspace{20pt} = \frac{1}{n} \sum_k \Big\| \frac{1}{n^2} \sum_{ij} \ket{\psi_{ijk}}\Big\|^2 \leq \frac{1}{n^3} \sum_{ijk} \| \ket{\psi_{ijk}}\|^2 = F \, ,
				\end{align*}
			}
			where the last inequality is Claim~2 in the proof of Lemma 7 in \cite{Unruh2012}. \qed 
		\end{proof}
		The following is a generalization of Lemma 7 from \cite{Unruh2012} in a different direction. It gives us control over the success probability of the extractor when the challenge consists of two parts, and the extractor works by rewinding once with a freshly chosen challenge pair, and once more where now one part of the challenge is re-used and only the other part is freshly chosen. 
		
		\begin{lem}\label{lemma:FvsV2}
			Let $P_{ij}$ ($1\leq i \leq n$, $1\leq j \leq m$)  be projections $\ket{\psi}$ a state vector, and set 
			$$
			V := \frac{1}{n m} \sum_{i,j} \|P_{i,j} \ket{\psi}\|^2
			\qquad\text{and}\qquad
			F := \frac{1}{n^2 m^3}\sum_{i_1,i_2 \atop j_1,j_2,j_3} \|P_{i_2 j_3} P_{i_2 j_2} P_{i_1 j_1} \ket{\psi} \|^2 \, .
			$$
			Then $F \geq V^6$. 
		\end{lem}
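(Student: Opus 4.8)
The plan is to prove the bound by sandwiching $F$ between two operator expressions, $F \ge \bra{\psi}A D A\ket{\psi} \ge V^6$, where $A$ is the fully averaged verification operator and $D$ is a ``two-level'' operator that records the re-use of the row index. First I would introduce the row averages $B_i := \frac1m\sum_j P_{ij}$, which are positive operators with $0\le B_i\le \mathbb{1}$, together with the global average $A := \frac1n\sum_i B_i = \frac1{nm}\sum_{i,j}P_{ij}$. Since each $P_{ij}$ is idempotent, $\|P_{ij}\ket{\psi}\|^2=\bra{\psi}P_{ij}\ket{\psi}$, so that $V=\bra{\psi}A\ket{\psi}$, and everything can be phrased in terms of moments of $A$.

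The first main step lower-bounds $F$ by an operator form. Writing $F$ as an average of the squared norms of $P_{i_2 j_3}P_{i_2 j_2}P_{i_1 j_1}\ket{\psi}$, I would group the $n^2m^3$ terms by the pivot index $(i_2,j_3)$ — the row and column of the \emph{outermost} projection — and apply Jensen's inequality in the form $\frac1N\sum_\alpha\|v_\alpha\|^2\ge\|\frac1N\sum_\alpha v_\alpha\|^2$ to the coherent average over the remaining indices $(i_1,j_1,j_2)$. Because the average over $(i_1,j_1)$ reconstructs $A$ and the average over $j_2$ reconstructs $B_{i_2}$, this gives $F\ge\frac1{nm}\sum_{i_2,j_3}\|P_{i_2 j_3}B_{i_2}A\ket{\psi}\|^2$. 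Summing the pivot column $j_3$ over the fixed row $i_2$ converts the last projection into one more factor $B_{i_2}$ (using $\frac1m\sum_{j_3}P_{i_2 j_3}=B_{i_2}$), yielding the clean bound $F\ge\bra{\psi}A\,\big(\tfrac1n\sum_i B_i^3\big)\,A\ket{\psi}$, and I set $D:=\frac1n\sum_i B_i^3$.

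The second step reduces this two-level operator to moments of $A$ alone. For each $i$, Jensen applied to the spectral distribution of $B_i$ in the (normalized) vector $A\ket{\psi}$ gives $\bra{A\psi}B_i^3\ket{A\psi}\ge\bra{A\psi}B_i\ket{A\psi}^3/\|A\psi\|^4$; averaging over $i$ and applying Jensen once more to pull the cube out of the average yields $\bra{\psi}ADA\ket{\psi}\ge\mu_3^3/\mu_2^2$, where $\mu_k:=\bra{\psi}A^k\ket{\psi}$ (here I use $\frac1n\sum_i\bra{A\psi}B_i\ket{A\psi}=\bra{\psi}A^3\ket{\psi}=\mu_3$ and $\|A\psi\|^4=\mu_2^2$). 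Finally, since $0\le A\le\mathbb{1}$ the $\mu_k$ are the moments of a distribution on $[0,1]$, so Lyapunov's inequality gives $\mu_1\le\mu_3^{1/3}$ and $\mu_2\le\mu_3^{2/3}$; hence $\mu_1^6\mu_2^2\le\mu_3^{10/3}\le\mu_3^3$, the last step because $\mu_3\le1$. This rearranges to $\mu_3^3/\mu_2^2\ge\mu_1^6=V^6$, and chaining the two steps proves $F\ge V^6$.

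I expect the crux to be the first step: choosing the correct pivot and checking that the (lossy) coherent averaging nevertheless lands on an operator form that is still provably $\ge V^6$. The tempting alternative of nesting Lemma~\ref{lemma:FvsV} with $t=2$ in the column variable and then again in the combined variable overshoots, producing only the weaker bound $V^9$; the improvement to the tight exponent $6$ comes precisely from averaging coherently rather than re-invoking the $t=2$ lemma, and from exploiting $A\le\mathbb{1}$ in the final moment comparison.
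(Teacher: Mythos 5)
Your proof is correct, and it takes a genuinely different route from the one in the paper. The paper's argument conditions on the outcome of the first projection (working with the normalized post-measurement states $P_{i_1 j_1}\ket{\psi}/\|P_{i_1 j_1}\ket{\psi}\|$), applies Lemma~\ref{lemma:FvsV} with $t=2$ to the two inner projections within the fixed row $i_2$, pulls the resulting cube out of the average by Jensen, discards the normalization factors via $\bra{\psi}P_{i_1 j_1}\ket{\psi}\le 1$, and then invokes Lemma~\ref{lemma:FvsV} once more on the remaining pair-average $W:=\frac{1}{n^2m^2}\sum_{i_1,i_2,j_1,j_2}\|P_{i_2 j_2}P_{i_1 j_1}\ket{\psi}\|^2$. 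You instead average coherently around the outermost projection to reach the operator bound $F\ge\bra{\psi}A\bigl(\frac1n\sum_i B_i^3\bigr)A\ket{\psi}$ and finish with spectral Jensen plus Lyapunov's moment inequality; every step of yours checks out (the only unmentioned edge case, $A\ket{\psi}=0$, is harmless since then $V=0$). Your closing remark that the nested use of Lemma~\ref{lemma:FvsV} ``overshoots to $V^9$'' is in fact more than a stylistic point: the paper's final step requires $W\ge V^2$, whereas Lemma~\ref{lemma:FvsV} only supplies $W\ge V^3$, and $W\ge V^2$ is false in general --- e.g.\ for two rank-one projections onto unit vectors $v_1,v_2$ with $\braket{v_1}{v_2}=\tfrac12$ and $\ket{\psi}\propto v_1+v_2$ one gets $W=\tfrac{15}{32}<\tfrac{9}{16}=V^2$. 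So the chain of inequalities written in the paper only establishes $F\ge V^9$, while your operator-averaging argument is what actually delivers the stated exponent $6$; it deserves to replace (or repair) the current proof rather than merely accompany it.
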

		
		\begin{proof}
			We set $\ket{\varphi_{i_1 j_1}} := P_{i_1 j_1}\ket{\psi}/\bra{\psi}P_{i_1 j_1}\ket{\psi}$. Then
			\begin{align*}
			F &= \frac{1}{n^2 m^3}\sum_{i_1,i_2 \atop j_1,j_2,j_3} \|P_{i_2 j_3} P_{i_2 j_2} P_{i_1 j_1} \ket{\psi} \|^2 \\
			&= \frac{1}{n^2 m}\sum_{i_1,i_2, j_1} \frac{1}{m^2}\sum_{j_2,j_3} \|P_{i_2 j_3} P_{i_2 j_2} \ket{\varphi_{i_1 j_1}} \|^2 \, \bra{\psi}P_{i_1 j_1}\ket{\psi} \\
			&\geq \frac{1}{n^2 m}\sum_{i_1,i_2, j_1} \bigg(\frac{1}{m}\sum_{j_2} \| P_{i_2 j_2} \ket{\varphi_{i_1 j_1}} \|^2\bigg)^3 \, \bra{\psi}P_{i_1 j_1}\ket{\psi} && \text{(Lemma~\ref{lemma:FvsV})} \\
			&= \frac{1}{n^2 m}\sum_{i_1,i_2, j_1} \bigg(\frac{1}{m}\sum_{j_2} \| P_{i_2 j_2}P_{i_1 j_1} \ket{\psi} \|^2/\bra{\psi}P_{i_1 j_1}\ket{\psi}^{2/3}\bigg)^3 \\
			&\geq \bigg(\frac{1}{n^2 m^2}\sum_{i_1,i_2, j_1,j_2} \| P_{i_2 j_2}P_{i_1 j_1} \ket{\psi} \|^2/\bra{\psi}P_{i_1 j_1}\ket{\psi}^{2/3}\bigg)^3 && \text{(Jensen's inequality)} \\
			&\geq \bigg(\frac{1}{n^2 m^2}\sum_{i_1,i_2, j_1,j_2} \| P_{i_2 j_2}P_{i_1 j_1} \ket{\psi} \|^2\bigg)^3 \\
			&\geq \bigg(\frac{1}{n m}\sum_{i_1,j_1} \| P_{i_1 j_1} \ket{\psi} \|^2\bigg)^6 && \text{(Lemma~\ref{lemma:FvsV})} 
			\end{align*}
			This proves the claim. \qed
		\end{proof}

	\end{appendix}
	
\end{document}